\documentclass[twocolumn,amsthm]{autart}

\usepackage[utf8]{inputenc}
\usepackage{comment}
\usepackage{algorithm}

\usepackage{algpseudocode}

\newcommand{\E}{\mathbb{E\,}}

\usepackage{graphicx}
\usepackage[version=4]{mhchem}
\usepackage{siunitx}
\usepackage{longtable,tabularx}
\usepackage{booktabs}
\usepackage{pifont}
\usepackage{graphics} 
\usepackage{float}
\usepackage{epsfig} 
\usepackage{times} 
\usepackage{amsmath} 
\usepackage{amssymb}  
\usepackage{mathtools}
\usepackage{color}
\usepackage[dvipsnames]{xcolor}
\usepackage{hyperref}

\newtheorem{assumption}{Assumption}

\newtheorem{theorem}{Theorem}
\newtheorem{proposition}{Proposition}
\newtheorem{corollary}{Corollary}

\usepackage{bm}
\usepackage{times}
\usepackage{soul}

\begin{document}

\begin{frontmatter}

\title{Control and Estimation Co-Design via Envelope-Theorem Gradients}

\author{Mohammad S. Ramadan}\ead{mramadan@anl.gov},
\author{Philip Dinenis}\ead{pdinenis@anl.gov},
\author{Mihai Anitescu}\ead{anitescu@mcs.anl.gov}

\address{Mathematics and Computer Science Division, Argonne National
Laboratory, Lemont, IL 60439, USA}

\begin{keyword}
Control co-design; Stochastic optimal control; State estimation;
Semidefinite programming; Sensor placement; Sensitivity analysis.
\end{keyword}

\begin{abstract}
Instead of the sequential plant--control--estimator design pipeline, which is in general not optimal,
we propose the Control and Estimation Co-Design (ContEst) framework, which poses the entire system design as a single two-stage problem solved by first-order methods. A key
enabler is that the gradient of each inner control/estimation optimal cost with
respect to the design parameters is available from the dual variables
the inner solver already returns, by the envelope theorem: exactly under mild
regularity (a unique inner minimizer), and as a subgradient under more relaxed conditions. That is, no differentiation
through the optimizer or its optimality conditions is needed. We first present the
framework in a linear quadratic Gaussian (LQG) regime, where the control and
estimation costs become two semidefinite programs (SDPs) whose design gradients are
read from the duals of the Lyapunov constraints. We extend the results to robust ($H_\infty$)
formulation to cover worst-case control and filtering problems. We also present an extension to constrained and nonlinear systems, using an extended Kalman filter (eKF) and a local linearization of the information-state dynamics to yield an approximate but convex model predictive control (MPC) problem whose design gradients are read directly from the costates. We apply our methods to a variety of realistic design examples from different fields and report significant design improvements.
\end{abstract}

\end{frontmatter}

\section{Introduction}\label{sec:intro}

Engineering systems are still, overwhelmingly, designed sequentially. A
plant is first conceived by some set of disciplines, the sensing and
instrumentation are specified next, and the controller and estimator are
designed last, once the plant, actuation and the measurement chain is frozen
\cite{garcia2019control}. Each stage restricts the next, so decisions made
early on, before any closed-loop consideration, can bound the performance
achievable at the end. This division of design effort is convenient, but
it is in general not optimal: the sequential solution (if exists) is one feasible
point of a larger joint problem and need not be an optimal or even a stationary point of it \cite{fathy2001coupling}.

Control co-design (CCD) addresses this on the control side, optimizing the
plant and the controller together~\cite{garcia2019control,allison2014multidisciplinary}. The same idea appeared in
process systems engineering as integrated design and control \cite{sakizlis2004recent,sharifzadeh2013integration} and in aerospace as
simultaneous structure/control design \cite{onoda1987approach,hale1985optimal}. What these literatures share is a
plant-controller scope: the sensing configuration and the estimator design are
typically fixed, or postponed to later steps. Yet for any
system without full and noiseless measurements (i.e. almost every real
system), achievable control performance is limited by the sensing and estimation quality.

This paper proposes the Control and Estimation Co-Design (ContEst) framework, in which the design parameters governing actuation and sensing/estimation are placed in a single two-stage optimization whose second-stage objective is the stochastic optimal control (dual control) cost. Because the system is
output feedback, the cost depends on the system not through the unobserved state but through its information state (commonly the filtering density
\cite{kumar2015stochastic}), whose evolution couples the output
equation and the state dynamics; optimizing through it therefore improves the controller and the estimator jointly. Mathematically, we solve the first stage with first-order methods (e.g.\ BFGS) driven by the stochastic optimal control gradient, which the envelope theorem supplies from the inner solver's dual variables (exactly under a unique inner minimizer, and as a subgradient otherwise). Across our studies this yields significant design cost reductions in which the sensing axis, absent from control-only co-design, is a major and sometimes dominant contributor.

\emph{Contributions:}
\begin{itemize}
    \item We formulate the control and estimation co-design as a single two-stage problem: a design first stage over a stochastic optimal control problem (Section~\ref{sec:formulation}).
    \item We show that the design gradients are obtained exactly from the dual variables, already returned by the solver. That is, no implicit differentiation of the Karush-Kuhn-Tucker (KKT) system is needed. Therefore, the gradient computation comes with no (or trivial) cost (Section~\ref{sec:envelope}).
    \item We provide regularity conditions (a unique inner minimizer and strict complementarity) under which the envelope gradient is exact, and show that when they fail the same returned dual still yields a valid subgradient, so the method degrades, but does not fail (Section~\ref{sec:envelope}; Appendices~\ref{app:lqgreg} and~\ref{app:subgrad}).
    \item Casting the inner problems as SDPs buys modeling flexibility such as: variety of convex design regularizers, quadratic stability over a polytopic (convex-hull) uncertainty set, and structural/sparsity constraints on the covariances or the control and estimator gains (Sections~\ref{sec:lqg} and~\ref{sec:hinf}).
    \item As an approximate extension to constrained nonlinear systems, a regime built around the linearization of eKF dynamics is used to construct a convex model predictive control (MPC) problem, with approximate gradients read from the
    dynamics costates (adjoint variables) (Section~\ref{sec:ekf}).
    \item Sensor allocation is absorbed directly into the formulation, either as a mixed-integer selection or as an $\ell_1$ relaxation promoting sparse sensing (Remark~\ref{rem:sensor}; applied in Section~\ref{sec:sim:mtdc}).
    \item We validate the framework on four realistic co-design studies across spacecraft, process, and power systems, with diverse objectives and modeling techniques (Section~\ref{sec:numerical}).
\end{itemize}

\section{Related work}\label{sec:related}

\emph{Control co-design and integrated design.}
\cite{fathy2001coupling} established that plant and controller optimization are
coupled and cannot be separated without loss; \cite{allison2014multidisciplinary}
cast dynamic-system CCD as multidisciplinary design optimization (MDO); and
\cite{herber2019nested} systematized the nested and simultaneous solution
strategies. In all of these the estimator design and sensing layers are typically assumed fixed.

\emph{Structural co-design and information architecture.}
\cite{grigoriadis1998integrate} alternated a control linear matrix inequality (LMI) with a plant parameter design. Closest to the present work, \cite{li2008integrating} treated the information architecture (sensor/actuator gains) as decision variables jointly with control or estimation via LMIs, and \cite{saraf2017h2,das2017sparse} pushed this toward sparse $H_2$ sensing with guaranteed estimation-error bounds. These methods alternate LMIs over gains for a fixed plant and cost class, whereas ContEst places actuation, sensing, and the estimator itself in one stochastic objective and obtains the joint design gradient exactly rather than relying on alternation.

\emph{Dual-based sensitivity for plant/controller co-design.}
The one line of work that shares our gradient mechanism is
\cite{moten2016gradient}, which solves plant-and-controller co-design as an outer
BFGS loop over plant parameters wrapped around an inner convex $H_\infty$ LMI, and
reads the gradient of the inner optimal value from the returned SDP duals via
parametric sensitivity analysis \cite{shapiro1997differentiability}. That the
optimal value of a parametric conic program is differentiated by its dual is
therefore not what we claim as new. ContEst differs in three respects. First, in
scope: their design vector enters the plant only, the output map $C_2$ is fixed,
and the objective is a deterministic worst-case transfer-function norm in which
estimation error cannot appear; ContEst places actuation, sensing, and the
estimator in one stochastic objective and reads the design gradient from the
control \emph{and} estimation LMI duals alike, across the $H_2$, $H_\infty$, and
eKF--MPC variants. Second, in rigor: the dual identity is invoked there without
uniqueness or nondegeneracy hypotheses, whereas we state the conditions (A1)--(A3)
under which it is exact and show in Section~\ref{sec:envelope} that the returned
dual remains a valid Clarke subgradient when they fail---which matters precisely
for a minimal-$\gamma$ $H_\infty$ inner problem, where the optimal dual is in
general set-valued. Third, in modeling reach: the full-order projection-lemma
parameterization they use presupposes unstructured Lyapunov certificates, and so
cannot express the structured gain patterns, sparsity-promoting regularizers, and
budgets coupling sensing to actuation that motivate our SDP path.

\emph{Sensor selection and LQG sensing co-design.}
A large literature selects sensors (or actuators) for estimation via
convex relaxation \cite{joshi2009sensor}, submodularity and greedy algorithms
\cite{tzoumas2016sensor,zhang2017sensor}, or sparsity-promoting selection
\cite{dhingra2014admm}. These covariance-based criteria are the control-theoretic
face of optimal experimental design (our estimation cost
$\operatorname{tr}(Q\Sigma^\varepsilon)$ is of A-optimal type, $\Sigma^\varepsilon$ is the state estimation error covariance), and the
identification community's least-costly experiment design
\cite{gevers2005identification} is a sibling of our noise level
co-design (Section~\ref{sec:weights}).
Closest to us, \cite{tzoumas2021lqg} co-design sensing with LQG control through
the separation structure (see also \cite{siami2021separation,zardini2021codesign});
but these choose \emph{which} sensors to deploy from a discrete menu, strictly in the
LQG regime.

\emph{Differentiable optimization and the envelope theorem.}
Recent work differentiates through convex control programs by implicitly differentiating the KKT system \cite{amos2017optnet,agrawal2019differentiable}. As we stress in
Section~\ref{sec:envelope}, for the optimal value (not the optimal argument) this machinery is unnecessary. Implicit differentiation of the KKT system solves an $m\times m$ linear system per design
direction, with $m=O(r_x^2)$ for the covariance SDPs ($r_x$ is the state-space dimension), i.e.\ $O(r_x^6)$ per
directional derivative, and it differentiates the minimizer, which ContEst
never needs. The envelope route reads the already-computed dual, therefore requires $O(r_x^2)$ for the whole gradient.

\emph{Parametric LQ and Riccati-based design gradients.}
The design gradient we recover is classical: the two-gramian formula dates to the
constant output feedback problem of \cite{levine1970determination} and the
parametric linear quadratic (LQ) machinery of \cite{makila1987computational,rautert1997computational}. On the estimation side \cite{belabbas2016geometric} differentiates the filter
Riccati, now a generic AD primitive \cite{kao2020autodiff} exploited by
adjoint-based co-design \cite{he2026efficient}. For unconstrained and plain linear quadratic regulator (LQR) inner problem, these
Riccati/adjoint routes are computationally cheaper than the SDPs we use, and ContEst can revert to the Riccati path when possible. Our
contribution is not the gradient in isolation but its unification, via the envelope theorem, across the control and estimation halves and their
$H_2$/$H_\infty$/eKF--MPC variants, and retention of the SDP path where it buys modeling power the Riccati route cannot express: quadratic stability over a convex uncertainty set, structural/sparsity
constraints, and others. Moreover, exploiting structure (diagonal or masked covariances) makes
the conic route cheaper in cost.

\emph{Positioning of the paper.}
Table~\ref{tab:positioning} gives a summary comparison of other methods per capability offered by ContEst (other methods may provide other benefits in other contexts). Nevertheless, ContEst need not operate in isolation; its only external output is a design gradient, it can be composed with the methods it is compared against rather than replacing them.

\begin{table*}[t]
\centering
\caption{Positioning of ContEst with respect to the nearest literature. Each column is a prior line of work, and each row is a capability; \checkmark\ means that line of work delivers that capability and \ding{55} means it cannot (or it does not natively).``Estimator
itself designed'' means tuning a dynamic estimator (e.g.\ bandwidth 
or noise level), not only sensor selection.}
\label{tab:positioning}
\small
\resizebox{\textwidth}{!}{%
\begin{tabular}{lcccccc}
\toprule
capability & CCD/MDO & sensor & LQG sensing & info-architecture
& diff.\ optimization & \textbf{ContEst}\\
 & \cite{garcia2019control,herber2019nested} & selection
 \cite{joshi2009sensor,tzoumas2016sensor} & co-design \cite{tzoumas2021lqg}
 & LMIs \cite{li2008integrating}
 & layers \cite{amos2017optnet,agrawal2019differentiable} & \\
\midrule
plant-actuation co-designed & \checkmark & \ding{55} & \ding{55} & \checkmark & \ding{55} & \checkmark\\
plant-sensing (output map) co-designed & \ding{55} & \checkmark & \checkmark~(menu selection) & \checkmark & \ding{55} & \checkmark~(menu selection or $\ell_1$)\\
estimator itself designed & \ding{55} & \ding{55} & \ding{55} & \ding{55} & \ding{55} & \checkmark\\
closed-loop control in objective & \checkmark & \ding{55} & \checkmark & \checkmark & \checkmark & \checkmark\\
nonlinear regime & \checkmark & \ding{55} & \ding{55} & \ding{55} & \checkmark & \checkmark~(eKF-MPC surrogate)\\
gradient w.r.t.\ design & often FD & \ding{55} & \ding{55} & $\approx$ LMI alternation & needs KKT diff. & exact from duals\\
\bottomrule
\end{tabular}}
\end{table*}

Throughout, $\| x \|_Q^2 = x^\top Q x$, $\mathbb{E}$ denotes expectation,
$\operatorname{tr}(\cdot)$ the trace, and $M \succeq 0$ ($M \succ 0$) positive
(semi)definiteness. For a signal $s$, $s_{i:j} = (s_i, s_{i+1}, \dots, s_j)$ denotes
the sequence from time $i$ to $j$. A superscript ${}^\star$ marks an optimal value.

\emph{Outline:} Section~\ref{sec:formulation} states the ContEst problem as a
two-stage program over the actuation parameters $\theta_f$ and the
sensing/estimation parameters $\theta_h$, with the stochastic optimal control
cost in the inner stage. Section~\ref{sec:lqg} specializes the inner stage to
the LQG case, where control and estimation are a pair of SDPs, and shows how
sensor precision enters as a noise-level design. Section~\ref{sec:envelope}
develops the paper's main tool: exact design gradients obtained from the inner
solver's dual variables via the envelope theorem, together with the regularity
conditions under which they are exact and the subgradient guarantee when those
conditions fail. Section~\ref{sec:hinf} gives the $H_\infty$ analog through the
bounded-real lemma, and Section~\ref{sec:ekf} extends the framework to
constrained nonlinear systems via an eKF--MPC surrogate whose approximate
gradients come from the dynamics costates. Section~\ref{sec:numerical} reports
four co-design studies --- spacecraft attitude determination and control, a
binary distillation column, PLL/DSE tuning in a low-inertia grid, and sparse
sensing in a multi-terminal HVDC ring --- followed in
Section~\ref{sec:sim:complexity} by a complexity analysis of the two inner
problems. Section~\ref{sec:conclusion} concludes. Proofs and the information-state
derivations are collected in the appendices.

\section{Problem formulation}\label{sec:formulation}

Consider the parametrized stochastic system
\begin{subequations}\label{eq:ss}
\begin{align}
x_{k+1} &= f_\theta(x_k, u_k) + w_k, \label{eq:ss-state}\\
y_k     &= h_\theta(x_k) + v_k, \label{eq:ss-output}
\end{align}
\end{subequations}
with state $x_k \in \mathbb{R}^{r_x}$, input $u_k \in \mathbb{R}^{r_u}$, and
output $y_k \in \mathbb{R}^{r_y}$. The maps $f_\theta, h_\theta$ are parametrized by $\theta \in \mathbb{R}^{r_\theta}$, representing plant parameters, actuator gains, and sensor tuning/placement. The state dynamics $f_\theta$ is locally bounded in its arguments for all $\theta$. The disturbances $w_k, v_k$ are i.i.d.,
mutually independent and independent of $x_0 \sim p_0$ (the prior, before any
measurement including $y_0$), with covariances $W,V$, respectively, which are possibly parametrized by $\theta$ as well.

The goal is to jointly optimize design, control, and estimation altogether. The
design cost $J_{\mathrm{des}}(\theta)$ and design set $\theta \in \Theta$
encode the price, manufacturability, placement difficulty or some mechanical, thermal, structural, and other considerations. The
control and estimation costs are embedded in the stochastic optimal control problem
\begin{equation} \label{eq:Jstoch_N}
\begin{aligned}
J_{\mathrm{sc}}^\star(\mathcal Z_0;\theta) &= \min_{u_{0:N-1}} J_{\mathrm{sc}}(\mathcal Z_0,u_{0:N-1};\theta),\\
&=\min_{u_{0:N-1}} \frac{1}{N}\,
\mathbb{E}\big\{ \sum_{k=0}^{N-1} \ell(x_k,u_k) + \ell^N(x_N) \big\},
\end{aligned}
\end{equation}
for finite-horizon problems, or for infinite-horizon as in 
\begin{equation}\label{eq:Jstoch_infty}
J_{\mathrm{sc}}^\star(\mathcal Z_0;\theta) = \min_{\kappa} \lim_{k \to \infty} \mathbb{E} \, \ell(x_k,\kappa(\mathcal Z_k)),
\end{equation}
with the $\ell, \ell^N$ are stage and terminal costs are typically
\begin{equation*}
   \ell(x_k,u_k) = \|x_k \|_Q ^2 + \| u_k\|_R^2, \quad \ell^N(x_N) = \| x_N \|_Q^2,
\end{equation*}
and the information available at the end of time $k$ is $\mathcal Z_k = (p_0, y_{0:k}, u_{0:k})$ (at time $k$, $y_k$ is measured just after $u_k$ is applied, hence a causal $u_k$ is a function of $\mathcal{Z}_{k-1}$ at most).

The above minimizations are possibly subject to state and input constraints and the expectation is over $(x_0, w_{0:N-1}, v_{0:N})$.

\begin{prob}[ContEst]\footnote{Problem~\ref{prob:two-stage} is a two-stage problem, a special form of a bilevel problem; the outer optimization contains the value of the lower-level problem, not its argmin. Solving a bilevel problem directly is hard since it requires implicitly differentiating the inner optimality conditions. Instead, in the two-stage case, the envelope theorem supplies the gradient
directly from the duals the inner solver already returns.}\label{prob:two-stage}
\begin{align*}
&\text{Design (first stage):}\quad \min_{\theta \in \Theta}\
J(\theta) = J_{\mathrm{des}}(\theta) + J_{\mathrm{sc}}^\star(\mathcal Z_0;\theta),\\
&\text{Control (second stage):}\quad J_{\mathrm{sc}}^\star(\mathcal Z_0;\theta)\ \text{as in
\eqref{eq:Jstoch_N} or \eqref{eq:Jstoch_infty}}.
\end{align*}
\end{prob}

\begin{rem}[Estimation co-design]
    Equations \eqref{eq:Jstoch_N} and \eqref{eq:Jstoch_infty}, although appear to be optimizing over control only, they implicitly require optimal estimation as well. The control policies are feedback in the information state (state filtered density), which require careful filter design for their reduction, approximation and accurate tracking \cite{kumar2015stochastic}.
\end{rem}

\begin{rem}[Operating scenarios]\label{rem:scenarios}
Problem~\ref{prob:two-stage} is stated for a single prior $p_0$. Multiple operating scenarios: initial densities, tracking trajectories/set-point, disturbance statistics, can be accommodated by replacing the second-stage value with the, say, a weighted average of the cost over the different scenarios.
\end{rem}

Note that the second stage is a stochastic (dual) control problem whose sufficient
statistic is the information state: the filtered density in this paper. The exact reduction of this problem into two LQG SDPs in the next section, and approximate reduction to eKF-MPC in a later section are detailed in Appendix~\ref{app:infostate}.

\section{The LQG case: LQR and Kalman SDPs}\label{sec:lqg}

We start with a linear (or linearized) plant
$$x_{k+1}=A_\theta x_k+B_\theta u_k+w_k, \quad y_k=C_\theta x_k+v_k,$$ 
with control input $u_k = \kappa (\mathcal{Z}_{k-1}) = K x_{k \mid k-1}$, the certainty equivalent state-feedback law, and $x_{k+1 \mid k}$ is the state of the observer at time $k+1$, given all the information up to the beginning of time $k+1$, i.e. $\mathcal{Z}_{k}$,
\begin{align*}
     x_{k+1 \mid k} = (A_\theta + B_\theta K) x_{k \mid k-1} + G(y_{k} - C_\theta x_{k \mid k-1}),
\end{align*}
with the observer (possibly Kalman) gain $G$. This results in the estimation error $\varepsilon_k = x_k - x_{k \mid k-1}$ with dynamics described by
\begin{align*}
    \varepsilon_{k+1} = (A_\theta - G C_\theta ) \varepsilon_{k} + w_k - G v_k.
\end{align*}
The second stage of Problem~\ref{prob:two-stage} is an LQG problem whose cost splits, by
the separation principle, into control and estimation parts
\begin{equation} \label{eq:linear stoch cost}
\begin{aligned}
&J_{\mathrm{sc}}=J_{\mathrm{est}}(\theta)+J_{\mathrm{cont}}(\theta),\\
&J_{\mathrm{est}}=\operatorname{tr}(Q\Sigma^\varepsilon), \quad
J_{\mathrm{cont}}=\operatorname{tr}(Q\Sigma)+\operatorname{tr}(RK\Sigma K^\top)
\end{aligned}
\end{equation}
where $\Sigma=\lim_{k \to \infty}\mathbf{Cov}( x_{k \mid k-1})$ is the
steady-state covariance of the state of the observer, and $\Sigma^\varepsilon=\lim_{k \to \infty}\mathbf{Cov}(\varepsilon_k)$ is the steady-state
estimation-error covariance.

For a fixed $\theta$, the two parts in \eqref{eq:linear stoch cost}, $J_{\mathrm{est}}$ and $J_{\mathrm{cont}}$, depend on disjoint variables $G$ and $K$, respectively, and therefore are minimized
independently \cite{tzoumas2021lqg}, but at the design level $\theta$ couples
$A_\theta,B_\theta,C_\theta$ and the noise covariances, and hence, $\theta$ also couples both the control
gain $K$ and the observer gain $G$. Since $\theta$ enters the system through the system matrices $A_\theta,B_\theta,C_\theta$ and possibly the covariance matrices $W,V$, we seek next to first find the jacobians of the optimal $J_{\mathrm{est}}$ and $J_{\mathrm{cont}}$ with respect to these matrices, then use the chain rule to evaluate the gradient with respect to $\theta$ (that is why we need the differentiability of these matrices with respect to $\theta$ in Assumption~\ref{as:env}-\emph{(A4)}).

\emph{Control SDP.} Following the linearizing change of
variables $L=K\Sigma$ (with the epigraph slack $Z_0$ upper-bounding the input
second moment $K\Sigma K^\top$ through the first LMI), we have \cite{boyd1993control}
\begin{equation}\label{eq:Jcont}
\begin{aligned}
&J_{\mathrm{cont}}^\star(\theta) = \min_{\Sigma,L,Z_0}\
\operatorname{tr}(Q\Sigma) + \operatorname{tr}(R Z_0),\quad \text{s.t. }\\
&
\begin{bmatrix} Z_0 & L \\ L^\top & \Sigma \end{bmatrix}
\succeq 0, \,\,\,\begin{bmatrix}
\Sigma - W & A_\theta\Sigma + B_\theta L \\
(A_\theta\Sigma + B_\theta L)^\top & \Sigma
\end{bmatrix} \succeq 0,
\end{aligned}
\end{equation}
with optimal gain $K^\star = L^\star (\Sigma^\star)^{-1}$.

\emph{Estimation SDP.} Dually, and moving from the controllability gramian $\Sigma^\varepsilon$ to the observability one $P^\varepsilon$ that solves a
Lyapunov inequality, which after the change of variables $F = P^\varepsilon G$
(with slack $Z_1$) and the LMI relaxation, gives \cite{boyd1993control}
\begin{equation}\label{eq:Jest}
\begin{aligned}
&J_{\mathrm{est}}^\star(\theta) = \min_{P^\varepsilon,F,Z_1}\
\operatorname{tr}(W P^\varepsilon) + \operatorname{tr}(V Z_1),\quad \text{s.t. }\\
&\begin{bmatrix} Z_1 & F^\top \\ F & P^\varepsilon \end{bmatrix}
 \succeq 0, \,\,\, \begin{bmatrix}
P^\varepsilon - Q & A_\theta^\top P^\varepsilon - C_\theta^\top F^\top \\
P^\varepsilon A_\theta - F C_\theta & P^\varepsilon
\end{bmatrix} \succeq 0,
\end{aligned}
\end{equation}
and $G ^ \star = (P^{\varepsilon\, \star})^{-1}F^\star$, which is dual in form to \eqref{eq:Jcont} and exposing the sensing operator $C_\theta$,
i.e.\ the estimation half of the co-design \cite{li2008integrating}.

\emph{Design gradients} Let us first define the general form Lagrangian we will be using across this paper\footnote{We abuse the notation by redefining $\mathfrak L$ several times across the paper, the particular definition we refer to in the text is inferred from context.}
\begin{align} \label{eq:lagrangian general}
    &\mathfrak L(p, \{\lambda_i\}_{i=1}^{r_{\lambda}}, \{S_i \}_{i=1}^{r_{S}}; \theta) = \nonumber\\
    &\hskip 10mm V(p; \theta) - \sum_{i=1}^{r_{\lambda}}\left \langle \lambda_i, m_i \right \rangle - \sum_{i=1}^{r_{S}} \left \langle S_i, M_i\right \rangle,
\end{align}
where $p$ is the primal variable, $V$ is the cost function, $\lambda_i's$ and $S_i's$ correspond to the equality constraints $m_i = 0$ and LMIs $M_i \succeq 0$, respectively. For the particular case of \eqref{eq:Jcont} (or similarly in form for the estimation problem \eqref{eq:Jest})
\begin{align} \label{eq:lagrangian lqr}
    &\mathfrak L(\Sigma, L , Z_0, S_1, S_2; \theta) = \nonumber\\
    &\hskip 10mm \operatorname{tr}(Q\Sigma) + \operatorname{tr}(R Z_0) - \left \langle S_1, M_1 \right \rangle - \left \langle S_2, M_2\right \rangle,
\end{align}
where $M_1$ and $M_2$ are the left-hand-side matrices of the epigraph and Lyapunov LMIs in \eqref{eq:Jcont}, and $S_1, S_2 \succeq 0$ are their dual variables, all respectively. The envelop theorem under some regularity conditions, as will be detailed in Section~\ref{sec:envelope}, states that the gradient of $\nabla_\theta J_{cont}^\star$ equals to $\nabla_\theta L$ with the primal and dual variables fixed to their optimal values
\begin{align*}
    \nabla_\theta J_{cont}^\star &= \nabla_\theta \mathfrak L(\Sigma, L^\star , Z_0^\star, S^\star_1, S^\star_2; \theta),\\
    &= - \nabla_\theta \left \langle S^\star_2, M_2\right \rangle,
\end{align*}
since $\theta$ enters only through $M_2$,
\begin{align*}
    &= - \nabla_\theta \operatorname{tr} \Big (
\begin{bmatrix} 
    S_{2,[1,1]}^\star & S_{2,[1,2]}^\star \\ 
    S_{2,[1,2]}^{\star\,\top} & S_{2,[2,2]}^\star
\end{bmatrix} \times \\
&\hskip 20mm
\begin{bmatrix}
\Sigma^\star - W & A_\theta\Sigma^\star + B_\theta L^\star \\
(A_\theta\Sigma^\star + B_\theta L^\star)^\top & \Sigma^\star
\end{bmatrix}
    \Big ),
\end{align*}
where $S_{2,[i,j]}^\star$ denotes the corresponding-in-dimension block $(i,j)$ of $S_2^\star$. Towards using the chain rule, we first find the jacobians with respect to the system matrices
\begin{equation}\label{eq:gradAB}
\frac{\partial J_{\mathrm{cont}}^\star}{\partial A_\theta} = -2\, S_{2,[1,2]}^\star\Sigma^\star,
\quad
\frac{\partial J_{\mathrm{cont}}^\star}{\partial B_\theta} = -2\, S_{2,[1,2]}^\star(L^\star)^\top,
\end{equation}
and similarly for the estimation case in \eqref{eq:Jest}, with
$\tilde S \succeq 0$ defined as the corresponding Lyapunov inequality dual variable in a Lagrangian similar in form to \eqref{eq:lagrangian lqr},
\begin{equation}\label{eq:gradAC}
\frac{\partial J_{\mathrm{est}}^\star}{\partial A_\theta} = - 2\, P^{\varepsilon\star} \tilde S^{\,\star\,\top}_{2,[1,2]},
\quad
\frac{\partial J_{\mathrm{est}}^\star}{\partial C_\theta} = 2\, F^{\star\,\top} \tilde S^{\,\star\,\top}_{2,[1,2]}.
\end{equation}

Notice the forms of the jacobians in \eqref{eq:gradAB} and \eqref{eq:gradAC}, explained by the facts: (i) the control and estimation problems are transposed duals, swapping $(A_\theta, B_\theta, Q, R, W)$ with $(A_\theta^\top, C_\theta^\top, W, V, Q)$, (ii) $\partial \cdot / \partial A_\theta^\top = (\partial \cdot / \partial A_\theta)^\top$, and (iii) $F$ swapped with $-L^\top$.

The design gradient follows by the chain rule
(Section~\ref{sec:envelope}),
\begin{align}\label{eq:chain}
\frac{\partial J_{\mathrm{cont}}^\star}{\partial \theta_i} &= \left \langle\frac{\partial J_{\mathrm{cont}}^\star}{\partial A_\theta},
\frac{\partial A_\theta}{\partial \theta_i}\right \rangle
+ \left \langle\frac{\partial J_{\mathrm{cont}}^\star}{\partial B_\theta},
\frac{\partial B_\theta}{\partial \theta_i}\right \rangle.
\end{align}
The full gradient vector is constructed from its components
\begin{align*}
    \nabla_\theta J_{\mathrm{sc}}^\star = \Big [\left(\frac{\partial J_{\mathrm{cont}}^\star}{\partial \theta_1} + \frac{\partial J_{\mathrm{est}}^\star}{\partial \theta_1}\right )\,\, \left (\frac{\partial J_{\mathrm{cont}}^\star}{\partial \theta_2} + \frac{\partial J_{\mathrm{est}}^\star}{\partial \theta_2} \right )\, \cdots  \Big]^\top.
\end{align*}

\subsection{Sensor precision: noise level co-design}\label{sec:weights}
The weights $Q, R$ of the quadratic cost and the noise covariances $W, V$ of the
process and measurement can also be parametrized by $\theta$. This matters in practice because they can represent the knobs a sensor or actuator budget buys.

Concretely, both LQG subproblems are affine in these data, so their sensitivities
are read off directly, with no LMI-dual partition required for the objective
terms. From the control SDP~\eqref{eq:Jcont},
\begin{equation}\label{eq:gradQRcont}
\frac{\partial J_{\mathrm{cont}}^\star}{\partial Q_\theta} = \Sigma^\star,\quad
\frac{\partial J_{\mathrm{cont}}^\star}{\partial R_\theta} = Z_0^\star,\quad
\frac{\partial J_{\mathrm{cont}}^\star}{\partial W_\theta} = \,S_{2,[1,1]}^\star,
\end{equation}
the first two from the objective $\operatorname{tr}(Q_\theta\Sigma)+\operatorname{tr}(R_\theta Z_0)$ and the
last from the $(1,1)$ block of $S^\star_2$. From the estimation SDP~\eqref{eq:Jest},
\begin{equation}\label{eq:gradQRest}
\frac{\partial J_{\mathrm{est}}^\star}{\partial W_\theta} = P^{\varepsilon\star},\quad
\frac{\partial J_{\mathrm{est}}^\star}{\partial V_\theta} = Z_1^\star,\quad
\frac{\partial J_{\mathrm{est}}^\star}{\partial Q_\theta} = \tilde S_{2,[1,1]}^{\star}.
\end{equation}
The total gradient is then the sum of all the contributions, added to \eqref{eq:chain} via chain rule.

\begin{rem}[Modeling flexibility]
Posing both halves as SDPs is not only convenient for the dual-based gradients:
it also opens the formulation to constraints a plain Riccati recursion cannot
express: LMI regularizers (e.g.\ trace or nuclear-norm penalties promoting sparse
actuation/sensing), budget and $\ell_1$ sensor-selection constraints on $\theta$,
or quadratic stability robustness constraints over a polytopic data set. Remark~\ref{rem:regexact} in the next section
shows every such convex ingredient is inherited by the exact-gradient machinery of
Section~\ref{sec:envelope}.
\end{rem}

\section{Exact design gradients via the envelope theorem}\label{sec:envelope}

The convex second-stage problem is a special case of the following conic form:
\begin{equation}\label{eq:generic}
J^\star(\theta) = \min_{z}\ J(z,\theta)\quad \text{s.t.}\quad g(z,\theta)=0,\ \
G(z,\theta) \succeq 0,
\end{equation}
with Lagrangian $\mathfrak L(z,\mu,S,\theta) = J(z,\theta) + \langle \mu, g(z,\theta)\rangle
- \langle S, G(z,\theta)\rangle$, and optimal primal-dual triple
$(z^\star,\mu^\star,S^\star)$. We collect the assumptions under which $J^\star$ is differentiable and its gradient is read from the primal-dual triple.

\begin{assumption}\label{as:env}
On a neighborhood $\Theta_0$ of the design point $\theta$:
\emph{(A.I)}~for each $\theta\in\Theta_0$, \eqref{eq:generic} is convex in $z$,
its data $(\phi,g,G)$ are jointly $C^1$ in $(z,\theta)$, and Slater's condition
holds, so strong duality holds with an attained primal-dual solution;
\emph{(A.II)}~the primal minimizer $z^\star(\theta)$ is unique;
\emph{(A.III)}~the primal--dual solution is \emph{nondegenerate} and strictly
complementary, so the dual $(\mu^\star,S^\star)$ is unique
\cite{alizadeh1997complementarity,shapiro1997differentiability};
\emph{(A.IV)}~$\theta$ enters \eqref{eq:generic} only through finitely many data
objects $D(\theta)$ (e.g.\ $A_\theta,B_\theta,C_\theta$) that are $C^1$
in $\theta$.
\end{assumption}

Assumptions (A.I)-(A.III) are the regularity under which the optimal value
of a conic (in particular semidefinite) program is differentiable and its
sensitivity is governed by the unique dual
\cite{bonnans2000perturbation,shapiro1997differentiability}: (A.I) establishes a no duality-gap, which is necessary to establish $J^\star = \mathfrak L(z^\star,\mu^\star,S^\star)$, (A.II) and (A.III) guarantee the optimal primal-dual uniqueness, hence the uniqueness of the gradient of the Lagrangian at the optimal primal-dual point, (A.IV) is a modeling convenience so the chain rule can be evaluated cheaply.

\begin{theorem}[Envelop theorem: exact design gradient from the inner dual]\label{prop:envelope}
Under Assumption~\ref{as:env}, $J^\star$ is differentiable at $\theta$ and
\begin{equation}\label{eq:envelope}
\nabla_\theta J^\star(\theta) = \nabla_\theta \mathfrak L(z^\star,\mu^\star,S^\star,\theta)
= \partial_\theta J + \langle \mu^\star, \partial_\theta g\rangle
- \langle S^\star, \partial_\theta G\rangle,
\end{equation}
all partials evaluated at $(z^\star,\theta)$ with $z^\star,\mu^\star,S^\star$
held fixed.
\end{theorem}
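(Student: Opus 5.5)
The plan is to sandwich $J^\star$ between two first-order bounds that share the slope $\nabla_\theta\mathfrak L(z^\star,\mu^\star,S^\star,\theta)$: an upper bound from weak duality in one direction, and a lower bound from primal–dual continuity in the other. Differentiability of $J^\star$ at $\theta$ then follows from the squeeze.

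\emph{Upper bound.} Fix the optimal triple $(z^\star,\mu^\star,S^\star)$ at $\theta$ and perturb to $\theta'=\theta+t d$ in $\Theta_0$. Strong duality at $\theta$ (A.I) gives $J^\star(\theta)=\mathfrak L(z^\star,\mu^\star,S^\star,\theta)$, with $g(z^\star,\theta)=0$ and $\langle S^\star,G(z^\star,\theta)\rangle=0$. At $\theta'$ the point $z^\star$ need not be feasible, so the comparison goes through the dual. By weak duality at $\theta'$,
\[
J^\star(\theta')\ \ge\ \inf_z \mathfrak L(z,\mu^\star,S^\star,\theta'),
\]
which is a lower bound on $J^\star(\theta')$ rather than the upper bound needed here. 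I would therefore swap roles: use the optimal triple $(z'^\star,\mu'^\star,S'^\star)$ at $\theta'$, which exists by (A.I) on $\Theta_0$, together with the dual $(\mu^\star,S^\star)$ at $\theta$. Since $z'^\star$ is feasible at $\theta'$ and $S^\star\succeq0$, we have $\mathfrak L(z'^\star,\mu^\star,S^\star,\theta')\le J(z'^\star,\theta')=J^\star(\theta')$. Because $\mathfrak L(\cdot,\mu^\star,S^\star,\theta)$ is convex in $z$ and minimized at $z^\star$ (KKT stationarity), expanding in $\theta$ gives
\[
J^\star(\theta')-J^\star(\theta)\ \ge\ \mathfrak L(z'^\star,\mu^\star,S^\star,\theta')-\mathfrak L(z'^\star,\mu^\star,S^\star,\theta)
= t\,\nabla_\theta\mathfrak L(z'^\star,\mu^\star,S^\star,\theta)\cdot d+o(t),
\]
using joint $C^1$ of the data.

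\emph{Lower bound in the other direction.} Symmetrically, feasibility of $z^\star$ at $\theta$ and optimality of $(\mu'^\star,S'^\star)$ at $\theta'$ give
\[
J^\star(\theta)-J^\star(\theta')\ \ge\ \mathfrak L(z^\star,\mu'^\star,S'^\star,\theta)-\mathfrak L(z^\star,\mu'^\star,S'^\star,\theta')
= -t\,\nabla_\theta\mathfrak L(z^\star,\mu'^\star,S'^\star,\theta')\cdot d+o(t).
\]
Both one-sided difference quotients are thus sandwiched between directional derivatives of $\mathfrak L$ evaluated at "mixed" primal–dual points. If $(z'^\star,\mu'^\star,S'^\star)\to(z^\star,\mu^\star,S^\star)$ as $t\to0$, both bounds converge to $\nabla_\theta\mathfrak L(z^\star,\mu^\star,S^\star,\theta)\cdot d$. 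Linearity in $d$ and the $C^1$ dependence then give Fréchet differentiability, with the gradient equal to \eqref{eq:envelope}. Expanding $\nabla_\theta\mathfrak L$ yields the stated three-term formula, and (A.IV) lets it be evaluated by the chain rule through $D(\theta)$.

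\emph{Main obstacle.} The crux is continuity of the primal–dual solution map at $\theta$. I would argue it in three steps.
\begin{itemize}
\item[(i)] Slater's condition at $\theta$ persists nearby and yields uniform boundedness of the dual solution sets on a smaller neighborhood; the primal solution sets are bounded as well, after restricting to a compact level set, which the uniqueness in (A.II) supports.
\item[(ii)] Any limit point of $(z'^\star,\mu'^\star,S'^\star)$ satisfies the KKT system at $\theta$ by closedness of the KKT graph, since all data are continuous.
\item[(iii)] By uniqueness, (A.II) for the primal and (A.III) for the dual, that limit is $(z^\star,\mu^\star,S^\star)$.
\end{itemize}
Strict complementarity and nondegeneracy are used only to guarantee dual uniqueness. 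Without it, the same sandwich yields a directional derivative $\min/\max$ over the dual set, which is Danskin-type. If step (i) turns out to be delicate, I would instead cite the upper semicontinuity results of \cite{bonnans2000perturbation,shapiro1997differentiability} for solution maps of conic programs under Robinson's constraint qualification, which Slater's condition implies for SDPs.
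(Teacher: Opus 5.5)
Your proof is correct, but it is not the argument the paper gives for Theorem~\ref{prop:envelope}. It is essentially the argument of Appendix~\ref{app:subgrad}.

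\emph{The paper's route.} The proof of Theorem~\ref{prop:envelope} asserts that (A.I)--(A.III) make the primal--dual map $\theta\mapsto(z^\star,\mu^\star,S^\star)$ single-valued and \emph{differentiable}. It then writes $J^\star(\theta)=\mathfrak L(z^\star(\theta),\mu^\star(\theta),S^\star(\theta),\theta)$ and applies the chain rule, discarding the implicit terms by stationarity of $\mathfrak L$.

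\emph{Your route.} You never differentiate the solution map. Your two inequalities are the duality estimate \eqref{eq:C1} applied in both directions, which squeezes the difference quotient between values of $\nabla_\theta\mathfrak L$ at mixed primal--dual points. So only \emph{continuity} of the solution map is needed, and you obtain it from local boundedness, closedness of the KKT graph, and uniqueness. The paper uses exactly this sandwich to prove Proposition~\ref{prop:subgrad}, noting there that with both solutions unique the two bounds merge into the exact derivative.

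\emph{What each buys.} Your route uses the hypotheses at face value: the data need only be $C^1$, and nondegeneracy and strict complementarity enter only through dual uniqueness. The chain-rule proof is shorter, but its differentiability claim is an implicit-function-type result. That result is available for linear SDPs, where strict complementarity plus primal--dual nondegeneracy make the KKT Jacobian nonsingular. It is not available in the generality of \eqref{eq:generic}. For example, $\min_z\, z^4-\theta z$ satisfies Assumption~\ref{as:env}, yet $z^\star(\theta)=(\theta/4)^{1/3}$ is not differentiable at $\theta=0$. Your sandwich still gives $\nabla J^\star(0)=-z^\star(0)=0$.

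\emph{Minor points.}
The paragraph headed ``Upper bound'' actually produces the \emph{lower} bound on $J^\star(\theta')-J^\star(\theta)$.

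The uniform $o(t)$ terms come from continuity of $\nabla_\theta\mathfrak L$ on the compact set of step (i). For the primal part of that set, convexity plus Slater does the work, with (A.II) only closing the contradiction. If nearby minimizers escaped to infinity, moving a fixed distance along segments from feasible points near $z^\star$ would produce a second minimizer at $\theta$.

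Finally, run the sandwich with $\theta'=\theta+h$ for arbitrary small $h$ rather than along rays $td$. This gives $J^\star(\theta+h)-J^\star(\theta)=\langle\nabla_\theta\mathfrak L(z^\star,\mu^\star,S^\star,\theta),h\rangle+o(\|h\|)$ directly, with no appeal to ``linearity in $d$''. The appendix instead combines linear directional derivatives with local Lipschitz continuity.
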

\begin{proof}
By (A.I)-(A.III) the solution map $\theta\mapsto(z^\star,\mu^\star,S^\star)$ is
single-valued and differentiable on $\Theta_0$
\cite{bonnans2000perturbation,shapiro1997differentiability}, and strong duality
with complementary slackness gives
$J^\star(\theta) = \mathfrak L(z^\star(\theta),\mu^\star(\theta),S^\star(\theta),\theta)$.
Differentiating and using inner stationarity $\nabla_z \mathfrak L = 0$ and the
stationarity of $\mathfrak L$ in $(\mu,S)$ at the solution,
$\nabla_\theta J^\star = \nabla_z \mathfrak L\,\partial_\theta z^\star
+ \nabla_{(\mu,S)}\mathfrak L\,\partial_\theta(\mu^\star,S^\star) + \partial_\theta \mathfrak L
= \partial_\theta \mathfrak L$ \cite{milgrom2002envelope,bonnans2000perturbation}.
\end{proof}

\begin{rem}\label{rem:regexact}
We can augment \eqref{eq:generic} with any finite family of additional convex,
conic-representable ingredients: regularizers $r_i(z,\theta)$ added to the objective
$J$, and/or constraints $G_i(z,\theta) \succeq 0$, all data $C^1$ in $(z,\theta)$. If the augmented program still
satisfies Assumption~\ref{as:env}, then $J^\star$ remains differentiable at $\theta$.
\end{rem}

\begin{rem}[Where exactness is lost]\label{rem:exactfail}
The recipe of Remark~\ref{rem:regexact} fails exactly when (A.III) does: at a
degenerate active face (rank-deficient active supports), or when a constraint
binds at its achievable floor (loss of strict complementarity: for instance
the minimal-$\gamma$ optimum of Section~\ref{sec:hinf}). The optimal dual is then
set-valued and the exact identity \eqref{eq:envelope} need not hold; the next
proposition shows the framework degrades gracefully rather than breaking.
\end{rem}

\begin{proposition}[Subgradient fallback under (A.I)]\label{prop:subgrad}
Suppose (A.I) and (A.IV) hold on a neighborhood of $\theta$ with a
bounded primal optimal set at $\theta$, but (A.II) or (A.III) fail, while at least one of the primal or the dual optimal solution at $\theta$ is still unique.
Then $J^\star$ is locally Lipschitz and directionally differentiable, and the
particular dual $(\mu^\star,S^\star)$ the inner solver returns constructs a subgradient, $\partial_\theta \mathfrak L(z^\star,\mu^\star,S^\star,\theta)\in\partial J^\star(\theta)$
(the Clarke subdifferential). If the dual is unique, $\partial_\theta L$ also supplies a descent direction and
first-order stationarity reads $0\in\partial J^\star(\theta)$.
\end{proposition}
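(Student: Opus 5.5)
The plan is to treat $J^\star$ as the optimal-value function of a parametric convex program. We then invoke the classical Danskin/Gol'shtein directional-derivative formula for such value functions, as in \cite{bonnans2000perturbation,shapiro1997differentiability}, and read off the Clarke subdifferential from it.

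\textbf{Step 1 (Lipschitz continuity).} Slater's condition at $\theta$ (A.I) persists on a neighborhood, because the data are $C^1$ in $(z,\theta)$. Hence the dual optimal sets $\Lambda^\star(\theta')$ are nonempty and uniformly bounded for $\theta'$ near $\theta$. Together with the bounded primal optimal set at $\theta$, we apply Robinson/Berge-type stability to obtain inf-compactness: the primal optimal sets stay in a fixed compact set and the map $\theta' \mapsto Z^\star(\theta')$ is outer semicontinuous.

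Next we write $J^\star(\theta') = \max_{(\mu,S)\in\Lambda^\star} \min_z \mathfrak L$ via strong duality. Since $\mathfrak L$ is $C^1$ in $\theta$ on bounded sets, we bound
\[
J^\star(\theta') - J^\star(\theta)
\]
from above by evaluating $\mathfrak L(\cdot,\cdot,\cdot,\theta')$ at $z^\star(\theta)$ with a dual from $\Lambda^\star(\theta')$. Symmetrically, we bound it from below using $z^\star(\theta')$ and a dual from $\Lambda^\star(\theta)$. The uniform bounds on the primal and dual optimal sets give a common Lipschitz constant.

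\textbf{Step 2 (directional derivative).} We invoke the Gol'shtein formula
\[
(J^\star)'(\theta;d) = \min_{z\in Z^\star(\theta)} \max_{(\mu,S)\in\Lambda^\star(\theta)} \langle \nabla_\theta \mathfrak L(z,\mu,S,\theta), d\rangle,
\]
which is valid for convex programs under Slater's condition with compact optimal sets. The hypothesis that at least one of $Z^\star$ or $\Lambda^\star$ is a singleton is exactly what makes the min and max interchangeable and the formula clean.

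If the dual is unique, the directional derivative becomes $\min_{z\in Z^\star}\langle \nabla_\theta\mathfrak L(z,\mu^\star,S^\star,\theta),d\rangle$. If instead the primal is unique, it becomes $\max_{(\mu,S)\in\Lambda^\star}\langle \nabla_\theta\mathfrak L(z^\star,\mu,S,\theta),d\rangle$.

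\textbf{Step 3 (Clarke subgradient).} The Clarke directional derivative dominates the ordinary one, $J^{\star\circ}(\theta;d)\ge (J^\star)'(\theta;d)$.

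In the unique-primal case, $(J^\star)'(\theta;\cdot)$ is the support function of the convex compact set
\[
\{\nabla_\theta \mathfrak L(z^\star,\mu,S,\theta) : (\mu,S)\in\Lambda^\star\}.
\]
Hence every returned dual yields $\langle \nabla_\theta\mathfrak L, d\rangle \le (J^\star)'(\theta;d) \le J^{\star\circ}(\theta;d)$ for all $d$, which is exactly membership in $\partial J^\star(\theta)$.

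In the unique-dual case, $(J^\star)'(\theta;\cdot)$ is a minimum of linear functions, so the inequality above goes the wrong way. This is the main obstacle. My first attempt is to use upper semicontinuity of $Z^\star(\cdot)$ together with Lipschitz continuity. For each returned $z^\star$, I would seek nearby parameters $\theta_k\to\theta$ at which the primal is (generically) unique. At those points $J^\star$ is differentiable with gradient $\nabla_\theta\mathfrak L(z^\star(\theta_k),\mu^\star,S^\star,\theta_k)$ by the argument of Theorem~\ref{prop:envelope}. Passing to limits, Clarke's characterization $\partial J^\star=\operatorname{co}\{\lim \nabla J^\star(\theta_k)\}$ then places $\nabla_\theta\mathfrak L(z^\star,\mu^\star,S^\star,\theta)$ in $\partial J^\star(\theta)$. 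An alternative I would consider is to use the convexity of $\mathfrak L(\cdot,\mu^\star,S^\star,\theta)$ in $z$, together with compactness, to select $z^\star$ attaining the limit.

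\textbf{Step 4 (descent and stationarity).} When the dual is unique, the directional derivative is $\min_{z}$ of linear forms, so it is bounded above by the linear form at the returned $z^\star$. Consequently $d=-\nabla_\theta\mathfrak L$ gives $(J^\star)'(\theta;d)\le -\|\nabla_\theta\mathfrak L\|^2<0$ unless this gradient vanishes, so $d$ is a descent direction. Finally, $0\in\partial J^\star(\theta)$ is the standard Clarke first-order necessary condition for a local minimizer of a locally Lipschitz function.
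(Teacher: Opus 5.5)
\textbf{Gap: the unique-dual case of Step 3.} Your Steps 1, 2 and 4, and the unique-primal half of Step 3, go through. Step 2 invokes the Gol'shtein formula as a black box, whereas the paper works directly with the sandwich inequality \eqref{eq:C1} and the convergence $z_t\to z^\star$ or $\lambda_t\to\lambda^\star$. Your route is heavier, but it also gives directional differentiability outright.

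What is missing is the proof that $\hat g:=\nabla_\theta\mathfrak L(z^\star,\mu^\star,S^\star,\theta)\in\partial J^\star(\theta)$ when only the dual is unique. You flag this but do not close it, and your proposed repair fails for three reasons:
\begin{itemize}
\item Primal uniqueness need not hold at any parameter near $\theta$. The primal optimal set can be a nontrivial face on a whole neighbourhood, e.g.\ when the objective is flat along it.
\item At such $\theta_k$, Theorem~\ref{prop:envelope} would also need a unique dual there. The resulting gradient would involve that dual, not $(\mu^\star,S^\star)$.
\item Decisively, the minimizers $z^\star(\theta_k)$ need only accumulate at \emph{some} point $\bar z\in Z^\star(\theta)$, not at the particular $z^\star$ the solver returned. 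Clarke's limiting-gradient formula would then certify $\nabla_\theta\mathfrak L(\bar z,\mu^\star,S^\star,\theta)$ rather than $\hat g$. You cannot ``select'' $z^\star$ to attain the limit either, since it is handed to you by the solver.
\end{itemize}

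\textbf{The missing idea.} Clarke's generalized derivative allows a moving base point. This turns an \emph{upper} bound in direction $-d$ into a \emph{lower} bound in direction $d$. Take $y_t=\theta-td\to\theta$. In the unique-dual case, your own Step~2 formula gives
\[
\begin{aligned}
J^{\star\circ}(\theta;d)&\ge\limsup_{t\downarrow0}\frac{J^\star(y_t+td)-J^\star(y_t)}{t}=-(J^\star)'(\theta;-d)\\
&=\max_{z\in Z^\star(\theta)}\langle\nabla_\theta\mathfrak L(z,\mu^\star,S^\star,\theta),d\rangle\ge\langle\hat g,d\rangle .
\end{aligned}
\]
Since $d$ is arbitrary, $\hat g\in\partial J^\star(\theta)$. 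Equivalently, apply your unique-primal reasoning to $-J^\star$ and use $\partial(-J^\star)=-\partial J^\star$.

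This is precisely the paper's step. It takes the bound $J^\star(\theta+td)-J^\star(\theta)\le t\langle\hat g,d\rangle+o(t)$, replaces $d$ by $-d$, and reads the result as a difference quotient based at $\theta-td$.
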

\begin{proof}
Under (A.I) the value $J^\star$ is locally Lipschitz, as shown in Appendix~\ref{app:subgrad}, hence
its Clarke set $\partial J^\star(\theta)$ is nonempty, compact and convex. Uniqueness of the primal (resp.\ dual) solution yields a one-sided first-order bound on $J^\star$ at $\theta$ along every direction, which places
$\partial_\theta \mathfrak L(z^\star,\mu^\star,S^\star,\theta)$ in $\partial J^\star(\theta)$; when (A.II) and (A.III) both hold, the two bounds combine into the exact derivative of Theorem~\ref{prop:envelope}. Without either uniqueness the inclusion can fail (Appendix~\ref{app:subgrad}). 
\end{proof}

We now return to the validity of Assumption~\ref{as:env} and verify, for the LQG SDPs of Section~\ref{sec:lqg}, the regularity conditions required to satisfy this assumption.

\begin{proposition}[LQG regularity]\label{prop:lqgreg}
Fix $\theta$ and suppose $(A_\theta,B_\theta)$ is stabilizable, $(A_\theta,
C_\theta)$ is detectable, and $Q,R,W,V\succ0$, with all data $C^1$ in $\theta$.
Then for the control SDP~\eqref{eq:Jcont} and the estimation
SDP~\eqref{eq:Jest}: \emph{(a)}~Slater's condition holds; \emph{(b)}~the primal
minimizer $(\Sigma^\star, K^\star=L^\star(\Sigma^\star)^{-1})$ is unique (and similarly for the estimation SDP);
\emph{(c)}~the Lyapunov-LMI dual is unique and strictly complementary. Hence Assumption~\ref{as:env} holds
on a neighborhood $\Theta_0$ of $\theta$, and Theorem~\ref{prop:envelope} applies to both the control and estimation SDPs.
\end{proposition}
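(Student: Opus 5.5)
The plan is to treat the control SDP~\eqref{eq:Jcont} in detail and obtain the estimation SDP~\eqref{eq:Jest} by the transposition $(A_\theta,B_\theta,Q,R,W)\leftrightarrow(A_\theta^\top,C_\theta^\top,W,V,Q)$ noted after \eqref{eq:gradAC}. Detectability of $(A_\theta,C_\theta)$ then plays the role of stabilizability.

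The central tool is the Schur complement. For $\Sigma\succ0$, the Lyapunov LMI $M_2\succeq0$ is equivalent to
\[
\Sigma\succeq(A_\theta+B_\theta K)\Sigma(A_\theta+B_\theta K)^\top+W, \qquad K=L\Sigma^{-1}.
\]
The epigraph LMI $M_1\succeq0$ is equivalent to $Z_0\succeq K\Sigma K^\top$.

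For \emph{(a)}, stabilizability gives a gain $K$ with $A_\theta+B_\theta K$ Schur. Let $\Sigma$ solve the Lyapunov equation with right-hand side $W+\epsilon I$. Then $\Sigma\succ0$ and the inequality above holds strictly, so $M_2\succ0$. Choosing $Z_0=K\Sigma K^\top+\epsilon I$ makes $M_1\succ0$, so the point is strictly feasible. Strong duality with an attained dual also requires the primal value to be finite and the dual to be strictly feasible. The primal objective is bounded below by $0$ because $Q,R\succeq0$. For dual strict feasibility I would write the dual explicitly, as an observability-type Lyapunov problem in $S_2$, and exhibit an interior point using $Q,R\succ0$.

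For \emph{(b)}, I would argue by reduction to the classical Riccati problem.
\begin{itemize}
\item Since $R\succ0$, any optimum has $Z_0=K\Sigma K^\top$.
\item For a fixed feasible $K$, the Lyapunov inequality forces $\Sigma\succeq\Sigma_K$, where $\Sigma_K$ solves the corresponding Lyapunov equation.
\item Since $Q\succ0$, replacing $\Sigma$ by $\Sigma_K$ strictly decreases the cost unless $\Sigma=\Sigma_K$.
\item The SDP therefore reduces to minimizing $\operatorname{tr}(W P_K)$ over stabilizing $K$, where $P_K$ is the cost-to-go Lyapunov solution.
\item By completion of squares, $P_K\succeq P^\star$, the stabilizing DARE solution, with equality iff $K=K^\star$.
\item Since $W\succ0$, the minimizer $K^\star$ is unique. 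Hence $\Sigma^\star=\Sigma_{K^\star}$, $L^\star=K^\star\Sigma^\star$ and $Z_0^\star$ are unique.
\end{itemize}

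For \emph{(c)}, at the optimum the Schur complement of $M_2$ vanishes, so $\operatorname{rank}M_2^\star=r_x$. Its kernel is spanned by the columns of $\begin{bmatrix}I\\ -\Phi^\top\end{bmatrix}$-type vectors, with $\Phi=A_\theta+B_\theta K^\star$ (exact signs to be checked). Complementarity $S_2^\star M_2^\star=0$ forces $S_2^\star$ to be supported on this kernel, i.e.\ $S_2^\star=NPN^\top$ with $N$ the kernel basis. Stationarity of $\mathfrak L$ in $\Sigma$ and $L$ then gives
\[
P=\Phi^\top P\Phi+Q+K^{\star\top}RK^\star,
\]
a Lyapunov equation with a unique solution because $\Phi$ is Schur. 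That solution satisfies $P\succ0$ since $Q\succ0$. Hence $S_2^\star$ is unique and $\operatorname{rank}S_2^\star+\operatorname{rank}M_2^\star=2r_x$, which is strict complementarity. The same argument applied to $M_1^\star$, whose kernel is determined by $Z_0^\star=K^\star\Sigma^\star K^{\star\top}$, yields a unique $S_1^\star$, with multiplier fixed by stationarity in $Z_0$ ($=R$).

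The step I expect to be the main obstacle is nondegeneracy in the sense of \cite{alizadeh1997complementarity}. This is the transversality condition that the tangent space of the rank-$r_x$ manifold at $M^\star$ plus the range of the linear constraint map spans the whole symmetric space. I would first try to verify it via the equivalent dual condition: uniqueness of the dual together with strict complementarity and primal uniqueness implies nondegeneracy for SDPs satisfying Slater's condition (Alizadeh--Haeberly--Overton). If that route is not directly applicable, I would check transversality by hand, using that perturbations of $(\Sigma,L)$ generate arbitrary Lyapunov residuals when $\Phi$ is Schur.

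Finally, to extend the conclusions to a neighborhood $\Theta_0$, I would use that stabilizability, detectability and positive definiteness are open conditions. The stabilizing DARE solution depends continuously (indeed $C^1$) on $\theta$, so the above constructions persist locally and Assumption~\ref{as:env} holds on $\Theta_0$.
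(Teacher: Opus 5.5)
Your proposal is correct and follows essentially the same route as the paper's proof in Appendix~\ref{app:lqgreg}. That route is: a Slater point built from a stabilizing gain; complementary slackness confining the multipliers to $\ker M_1^\star$ and $\ker M_2^\star$; stationarity fixing the $M_1$ core to $R$ and reducing the rest to the closed-loop Stein equation, whose unique solution $P\succ0$ gives dual uniqueness and strict complementarity by rank counting. Where you differ, you add rigor the paper lacks: you prove (b) by reduction to $\operatorname{tr}(WP_K)\ge\operatorname{tr}(WP^\star)$ via completion of squares (the paper only cites it as well known), and you explicitly handle the nondegeneracy in (A.III) through the Alizadeh--Haeberly--Overton converse and the extension to $\Theta_0$ through openness of stabilizability and detectability, both of which the paper leaves implicit.
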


The proof is in Appendix~\ref{app:lqgreg}.
\hfill$\square$

Controllability and observability imply the stabilizability/detectability
hypotheses, so the practical reading of Proposition~\ref{prop:lqgreg} is:
controllable, observable plant with positive-definite weights $\Rightarrow$
envelop theorem can be used.

\section{The $H_\infty$ analog}\label{sec:hinf}

The LQG development of Section~\ref{sec:lqg} considers average performance
`against' stochastic white disturbances. When the disturbances are better modeled as
unknown-but-bounded $\ell_2$ signals and the designer cares about the
worst case, an $H_\infty$ approach can be used. The inner value is now the
squared worst-case closed-loop $\ell_2$-gain $\gamma^2(\theta)$ rather than a
stationary variance.

Write the linear plant with an explicit disturbance channel and a performance
output,
\begin{equation}\label{eq:hinf_plant}
x_{k+1} = A_\theta x_k + B_\theta u_k + E_\theta w_k,\quad
z_k = C_{z}\,x_k + D_{z}\,u_k,
\end{equation}
where $w_k\in\ell_2$ is the exogenous disturbance, $E_\theta$ its input map
(e.g.\ $E_\theta = W_\theta^{1/2}$, so the process noise level becomes a disturbance weight), and $z_k$ stacks the penalized
state and input, $C_z = [\,Q^{1/2}\;\,0_{r_x \times r_u}\,]^\top$, $D_z = [\,0_{r_u \times r_x}\;\,R^{1/2}\,]^\top$, so the
$H_\infty$ objective weighs exactly the same signals as in $J_{\mathrm{sc}}^\star$.

\emph{Control SDP.} By the discrete-time bounded-real lemma (BRL) with the standard
linearizing change of variables $Y = P^{-1}\succ 0$, $L = KY$
\cite{boyd1994linear,grigoriadis1998integrate}, the state-feedback $H_\infty$
synthesis problem is the SDP (with $\mu := \gamma^2$, $\epsilon > 0$ very small)
\begin{equation}\label{eq:Hinf_cont}
\begin{aligned}
&\gamma^2_{\mathrm{cont}}(\theta) = \min_{Y,L,\mu}\ \mu \\
&\text{s.t. }  Y \succeq \epsilon I,\ \
-\begin{bmatrix}
-Y & \star & \star & \star \\
0 & -\mu I & \star & \star \\
A_\theta Y + B_\theta L & E_\theta & -Y & \star \\
C_z Y + D_z L & 0 & 0 & -I
\end{bmatrix} \succeq 0,
\end{aligned}
\end{equation}
(the $\star$ blocks are inferred by symmetry) with optimal gain
$K^\star = L^\star (Y^\star)^{-1}$.

\emph{Design gradients from the duals.} With a Lagrangian similar in convention to that in \eqref{eq:lagrangian lqr} and dual variable $S_{\infty,2}$ corresponding to the second LMI in \eqref{eq:Hinf_cont}, the plant blocks
$A_\theta Y + B_\theta L$ sit in position $(3,1)$, and therefore the formulae for the jacobians
\begin{equation}\label{eq:gradHinf}
\frac{\partial \gamma^{\star\,2}_{\mathrm{cont}}}{\partial A_\theta} = 2\,S_{\infty,2,[3,1]}^\star\,Y^\star,
\quad
\frac{\partial \gamma^{\star\,2}_{\mathrm{cont}}}{\partial B_\theta} = 2\,S_{\infty,2,[3,1]}^\star\,(L^\star)^\top,
\end{equation}
with analogous expressions for $E_\theta$ (from the $(3,2)$ block $S_{\infty,2,[3,1]}^\star$),
and the design gradient assembled by the same chain rule~\eqref{eq:chain}.

\begin{corollary}[$H_\infty$ minimal-$\gamma$ design: a subgradient]\label{cor:hinf}
The expression \eqref{eq:gradHinf} for the minimal-$\gamma$ value
$\gamma^{\star\,2}_{\mathrm{cont}}(\theta)$ is, in general, only a subgradient: it
supplies a descent direction and the stationarity certificate
$0\in\partial\gamma^{\star\,2}_{\mathrm{cont}}$, but need not equal the gradient. Fixing $\gamma$ instead restores exactness, as we show next.
\end{corollary}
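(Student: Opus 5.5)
The plan is to reduce Corollary~\ref{cor:hinf} to Proposition~\ref{prop:subgrad} applied to the $H_\infty$ control SDP~\eqref{eq:Hinf_cont}. That problem is an instance of the generic conic form~\eqref{eq:generic}: the objective $J=\mu$, there are no equality constraints, and the two LMIs are $G_1 = Y-\epsilon I\succeq 0$ and $G_2\succeq 0$, the negated BRL matrix. The data enter affinely through $D(\theta)=(A_\theta,B_\theta,E_\theta)$. So (A.IV) holds whenever these matrices are $C^1$ in $\theta$, and the data are jointly $C^1$ in $(Y,L,\mu,\theta)$.

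The first step is to verify (A.I) on a neighborhood of $\theta$, assuming $(A_\theta,B_\theta)$ is stabilizable. Take any stabilizing $K$ and a strict Lyapunov/BRL certificate for it. Enlarging $\mu$ beyond the closed-loop $H_\infty$ norm then makes $G_2\succ 0$ strictly, and scaling $Y$ gives $Y\succ\epsilon I$. This strictly feasible point persists under small perturbations of the data, so Slater's condition holds on $\Theta_0$ and strong duality is attained. For the bounded primal optimal set, I would use the following facts. The sublevel sets in $\mu$ are bounded below by $0$. The block $-Y$ in position $(1,1)$ and the $(3,3)$ block of $G_2$ bound $Y$ together with $A_\theta Y+B_\theta L$. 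The $(4,4)$ block $-I$ bounds $C_zY+D_zL$, and since $R\succ0$, $D_z$ has full column rank, so $L$ is bounded. Proposition~\ref{prop:subgrad} then gives local Lipschitzness, directional differentiability, and the Clarke-subdifferential inclusion for the returned dual. This holds provided one of primal or dual is unique, which I would impose as the standing hypothesis, for instance uniqueness of $\mu^\star$ together with the dual returned by an interior-point solver.

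The second step is the computation. By the Lagrangian convention of~\eqref{eq:lagrangian lqr}, the quantity $\partial_\theta\mathfrak L$ equals $-\langle S_{\infty,2}^\star,\partial_\theta G_2\rangle$. Here $A_\theta,B_\theta$ appear only in the symmetric $(3,1)/(1,3)$ blocks of the BRL matrix, and $G_2$ carries an outer minus sign. This yields $\partial/\partial A_\theta = 2S^\star_{\infty,2,[3,1]}Y^{\star}$ and $\partial/\partial B_\theta = 2S^\star_{\infty,2,[3,1]}L^{\star\top}$, which is~\eqref{eq:gradHinf}, up to the transpose convention on blocks. Combined with the chain rule~\eqref{eq:chain}, this is the element of $\partial\gamma^{\star2}_{\mathrm{cont}}(\theta)$. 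The descent and stationarity claims follow from the last sentence of Proposition~\ref{prop:subgrad} when the dual is unique. Otherwise, $0\in\partial\gamma^{\star2}_{\mathrm{cont}}$ is the Clarke stationarity notion.

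The main obstacle is the ``in general, only'' part: I need to show that (A.III) genuinely fails, so that exactness cannot be claimed. My approach would be to argue that at the minimal $\gamma$ the BRL LMI is singular on a subspace tied to the worst-case disturbance direction. The optimizer sits on the boundary where the achievable floor is attained, and the primal is typically non-unique: $(Y,L)$ admits a family of certificates achieving $\mu^\star$. With a non-unique primal, the dual can fail strict complementarity, so the optimal dual set can be non-singleton. As a witness, I would try a scalar or diagonal example in which two decoupled channels attain the same peak gain. There the dual weight can be split arbitrarily between them, producing distinct dual-based expressions, so $\gamma^{\star2}$ has a kink along a perturbation of one channel. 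Finally, I would note that fixing $\gamma$ removes $\mu$ as the active floor variable, which is what the paper takes up next.
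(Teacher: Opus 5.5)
Your skeleton matches the paper's: at the minimal-$\gamma$ floor, invoke Proposition~\ref{prop:subgrad}. The checks you add (Slater, a bounded primal set, the block computation behind \eqref{eq:gradHinf}) are ones the paper's two-line proof omits.

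For boundedness, though, the work is done by the $(1,1)$--$(4,4)$ principal submatrix. It gives $Y\succeq YQY+L^\top RL$, hence $Y\preceq Q^{-1}$ and $L^\top RL\preceq Q^{-1}$. The $(1,1)/(3,3)$ blocks are homogeneous in $(Y,L)$ and bound nothing.

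Your tie witness is the right kind of evidence for ``need not equal the gradient'', and better than the paper's appeal to generic loss of strict complementarity. By Appendix~\ref{app:subgrad}, \eqref{eq:envelope} already holds whenever primal and dual are both unique. What must be exhibited is therefore non-uniqueness, not loss of strict complementarity.

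The gap is the uniqueness bookkeeping, on which both halves of the corollary hinge.
\begin{itemize}
\item ``Uniqueness of $\mu^\star$'' is automatic, since $\mu^\star$ is the optimal value, and says nothing about $(Y,L)$. An interior-point solver returning one dual does not make the dual set a singleton either. Proposition~\ref{prop:subgrad} needs the whole primal optimal set or the whole dual optimal set to be a singleton.
\item The mechanism you sketch is a family of optimal certificates $(Y,L)$ together with a set-valued dual. That is exactly the regime in which Proposition~\ref{prop:subgrad} is silent, and the example closing Appendix~\ref{app:subgrad} shows $\hat g$ can then fail to be a subgradient at all.
\item So the witness must have a unique primal and a set-valued dual, and in your two-channel tie you must prove the former. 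After grouping rows and columns by channel, the diagonal blocks of the combined BRL LMI are the tight single-channel LMIs. These pin $(Y_{ii},L_{ii})$ if each channel's optimal certificate is unique. You still have to show that the kernel conditions on the off-diagonal block force $Y_{12}$, $L_{12}$, $L_{21}$ to vanish.
\end{itemize}

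Finally, in that witness $\hat g$ is whichever convex combination of the two channel gradients the returned dual selects, and such an element need not give descent. For example, with channel gradients $(1,3)$ and $(1,-3)$ and $\hat g=(1,3)$, the direction $-\hat g$ increases the max. So the descent claim is available only under dual uniqueness, as you say. The paper's proof does not supply it in the set-valued case either.
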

\begin{proof}
At the optimum \eqref{eq:Hinf_cont} drives $\mu$ down to the optimal
attenuation $\gamma^{\star\,2}_{\mathrm{cont}}$, where the BRL LMI loses rank: the $(2,2)$
block $-\mu I$ tightens against the Schur complement of the plant blocks, so
the active face is degenerate and strict complementarity (A.III) generically
fails. By Proposition~\ref{prop:subgrad} the value is then locally Lipschitz
and the BRL dual the solver returns is a subgradient of
$\gamma^{\star \, 2}_{\mathrm{cont}}$.
\end{proof}

\emph{Fixed-$\gamma$ SDP: exact gradients.} Fix the attenuation $\gamma$
strictly above the optimal level (the problem \eqref{eq:Hinf_cont}) can be solved for $\gamma^\star$, then $\gamma$ is picked larger with some gap). Rather than minimizing $\mu$, one then
minimizes the worst-case cost itself,
$\operatorname{tr}(XW_\theta)=\operatorname{tr}(Y^{-1}W_\theta)$, over the BRL LMI at
that fixed $\gamma$; with an epigraph slack $T$ and a Schur complement this is the SDP
\begin{equation}\label{eq:Hinf_fixed}
\begin{aligned}
V_{\mathrm{wc}}(\theta, \gamma)=&\min_{Y,L,T}\ \operatorname{tr}(T)\\
\text{s.t. } & Y\succeq \epsilon I,\quad
\begin{bmatrix} T & E_\theta\\ E_\theta & Y\end{bmatrix}\succeq0,\\
& - \begin{bmatrix}
-Y & \star & \star & \star \\
0 & -\gamma^2 I & \star & \star \\
A_\theta Y + B_\theta L & E_\theta & -Y & \star \\
C_z Y + D_z L & 0 & 0 & -I
\end{bmatrix} \preceq 0,
\end{aligned}
\end{equation}
whose last LMI is exactly that of \eqref{eq:Hinf_cont} with $\mu=\gamma^2$ held
constant, and whose Schur slack enforces
$T\succeq W_\theta^{1/2}Y^{-1}W_\theta^{1/2}$, so $\operatorname{tr}(T)$ upper-bounds and at the optimum equals $\operatorname{tr}(Y^{-1}W_\theta)=\operatorname{tr}(XW_\theta)$.

\begin{corollary}[$H_\infty$ fixed-$\gamma$ design: exact gradient]\label{cor:hinf_fixed}
Fix $\gamma$ strictly above the optimal attenuation. Then \eqref{eq:Hinf_fixed} is
strictly feasible with a nondegenerate, strictly complementary dual, so
Assumption~\ref{as:env} holds and, by Theorem~\ref{prop:envelope} (equivalently
Remark~\ref{rem:regexact}, the Schur slack being one added convex constraint),
$V_{\mathrm{wc}}(\theta, \gamma)$ is differentiable at $\theta$, with the design
gradient read from the dual of the third LMI $S^\star_{\gamma,3}$ of \eqref{eq:Hinf_fixed},
\begin{equation}\label{eq:gradHinf_fixed}
\frac{\partial V_{\mathrm{wc}}}{\partial A_\theta}=2\,S_{\gamma,3,[3,1]}^\star\,Y^\star,\quad
\frac{\partial V_{\mathrm{wc}}}{\partial B_\theta}=2\,S_{\gamma,3,[3,1]}^\star\,(L^\star)^\top,
\end{equation}
of the same form as the minimal-$\gamma$ expression \eqref{eq:gradHinf} but now exact gradients rather than mere subgradients.
\end{corollary}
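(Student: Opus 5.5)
The plan is to verify (A.I)--(A.IV) for \eqref{eq:Hinf_fixed} and then invoke Theorem~\ref{prop:envelope}, exactly as Proposition~\ref{prop:lqgreg} does for the LQG SDPs. Afterwards I would read off the Jacobians \eqref{eq:gradHinf_fixed} from the third-LMI term of the Lagrangian.

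\emph{Conditions (A.I) and (A.IV).} Convexity and $C^1$ dependence of the data on $(Y,L,T,\theta)$ are immediate, since every block is affine in the decision variables and affine in $A_\theta,B_\theta,E_\theta$. Slater's condition comes from the gap above the optimal attenuation. Because $\gamma^2>\gamma^{\star\,2}_{\mathrm{cont}}$, there is a feasible $(Y_0,L_0)$ of \eqref{eq:Hinf_cont} with $\mu=\gamma^{\star\,2}_{\mathrm{cont}}+\delta<\gamma^2$, where $\delta>0$ is small. I would show that the BRL matrix at level $\gamma^2$ is then strictly definite. The $(2,2)$ block gains slack $\gamma^2-\mu>0$, and a small perturbation $Y_0\mapsto Y_0+\eta I$, combined with the Schur-complement form of the BRL, transfers this slack to the remaining blocks. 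Taking $T$ large makes the Schur slack strict, and $Y\succ\epsilon I$ is easily met. Strict feasibility is an open condition, so it persists on a neighbourhood $\Theta_0$. Condition (A.IV) holds because $\theta$ enters only through $A_\theta,B_\theta,E_\theta$ (and $W_\theta$).

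\emph{Conditions (A.II) and (A.III).} This is the main obstacle. For uniqueness of the primal, I would note that at the optimum $T^\star=E_\theta^\top (Y^\star)^{-1}E_\theta$. The objective $\operatorname{tr}(Y^{-1}W_\theta)$ is strictly convex in $Y$ when $W_\theta\succ0$, so $Y^\star$ is unique. Then $L^\star$ is pinned down, because the active BRL face at a fixed $Y$ determines $K=LY^{-1}$ as the unique solution of the associated $H_\infty$ Riccati equation with $\gamma$ above the optimum, in parallel with part (b) of Proposition~\ref{prop:lqgreg}. For uniqueness of the dual and strict complementarity, I would mirror Appendix~\ref{app:lqgreg}. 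The minimizer corresponds to the stabilizing solution of the $\gamma$-level Riccati equation. Because $\gamma$ is strictly suboptimal, the $(2,2)$ block $\gamma^2I-\cdot$ is strictly positive on the active face, so the BRL matrix has constant rank equal to that of the Riccati residual. Its null space is spanned by the closed-loop ``worst-case'' directions. The dual is then the closed-loop covariance-type gramian supported on exactly this null space, which gives $\operatorname{rank}S^\star+\operatorname{rank}M^\star=n$. Nondegeneracy follows from the Alizadeh--Haeberly--Overton transversality test, using stabilizability of the closed loop. I expect this rank and transversality bookkeeping to be the delicate point. The contrast with the degenerate $(2,2)$ block in the proof of Corollary~\ref{cor:hinf} is exactly what the gap $\gamma>\gamma^\star$ buys.

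\emph{Computing the gradient.} With Assumption~\ref{as:env} established, Theorem~\ref{prop:envelope} gives $\nabla V_{\mathrm{wc}}=\partial_\theta\mathfrak L$. Remark~\ref{rem:regexact} covers the additional Schur slack. The first LMI does not depend on $\theta$, and the second depends on $\theta$ only through $E_\theta$, which contributes the $E$-Jacobian. For the third LMI, the term $+\langle S_{\gamma,3},\,\cdot\,\rangle$ from the sign convention contains $A_\theta Y+B_\theta L$ in both block $(3,1)$ and block $(1,3)$. The symmetric pairing therefore gives $2\operatorname{tr}(S^{\star\top}_{\gamma,3,[3,1]}(A_\theta Y^\star+B_\theta L^\star))$. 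Differentiating with respect to $A_\theta$ and $B_\theta$ yields $2S^\star_{\gamma,3,[3,1]}Y^\star$ and $2S^\star_{\gamma,3,[3,1]}(L^\star)^\top$, which are the formulas in \eqref{eq:gradHinf_fixed}.
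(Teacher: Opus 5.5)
Your route is the paper's: the paper gives no argument beyond asserting strict feasibility, nondegeneracy and strict complementarity of \eqref{eq:Hinf_fixed} and then invoking Theorem~\ref{prop:envelope}. Your block-reading of the gradient is right. You correctly take the third LMI with the sign of \eqref{eq:Hinf_cont}; as printed, \eqref{eq:Hinf_fixed} has it reversed and is infeasible. Your added hypothesis $W_\theta\succ0$ is also right, since without it $Y^\star$ need not be unique. Two steps, however, do not hold up as written.

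\emph{Slater.} Take a feasible $(Y_0,L_0)$ at level $\mu<\gamma^2$. The level-$\gamma^2$ matrix keeps every kernel direction whose second block vanishes, so it need not be definite. The perturbation $Y_0\mapsto Y_0+\eta I$ at fixed $K$ shrinks $P=Y^{-1}$, and it can destroy feasibility rather than create slack. Concretely, on a kernel direction generated by $\xi$ with $A_{\mathrm{cl}}\xi=0$ and $P_0\xi=C_{\mathrm{cl}}^\top C_{\mathrm{cl}}\xi$, the quadratic form changes by $-\eta\|P_0\xi\|^2$ to first order. Use the strict bounded-real lemma instead. Since $\gamma>\gamma^\star$, there is a stabilizing $K$ with closed-loop $\ell_2$-gain below $\gamma$, hence a $P\succ0$ satisfying the strict LMI. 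Then $(P^{-1},KP^{-1},T)$ with $T\succ E_\theta^\top PE_\theta$ and $\epsilon<\lambda_{\min}(P^{-1})$ is a Slater point, and it persists near $\theta$.

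\emph{(A.II) and (A.III).} This is the more important gap: both rest on your unproved sentence that the minimizer is the stabilizing solution $P_\gamma$ of the game Riccati equation. Strict convexity pins $Y^\star$. But at a fixed $Y=P^{-1}$ the admissible gains satisfy $(K-K_\gamma)^\top(R+B_\theta^\top\tilde PB_\theta)(K-K_\gamma)\preceq\mathcal R(P)$, where $\tilde P=P+PE_\theta(\gamma^2I-E_\theta^\top PE_\theta)^{-1}E_\theta^\top P$, $K_\gamma=-(R+B_\theta^\top\tilde PB_\theta)^{-1}B_\theta^\top\tilde PA_\theta$, and $\mathcal R(P)$ is the game-Riccati residual. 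So the gains collapse to a point, and $\ker M^\star$ has dimension $r_x$, only where $\mathcal R=0$.

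Close this as in Appendix~\ref{app:lqgreg}, by exhibiting a KKT certificate at the candidate $(P_\gamma^{-1},K_\gamma P_\gamma^{-1},E_\theta^\top P_\gamma E_\theta)$, with $\tilde P$ taken at $P_\gamma$. Complementarity forces three things: the $Y\succeq\epsilon I$ multiplier vanishes (for $\epsilon<\lambda_{\min}(P_\gamma^{-1})$); the Schur-slack multiplier has $(2,2)$ block $P_\gamma E_\theta E_\theta^\top P_\gamma$; and $S^\star_{\gamma,3}=\mathcal N\Xi\mathcal N^\top$, where $\mathcal N$ is an $r_x$-column basis of $\ker M^\star$. $L$-stationarity then holds identically, because $B_\theta^\top\tilde PA_{\mathrm{cl}}+RK_\gamma=0$. $Y$-stationarity reduces, through the game Riccati equation, to $\Xi=\hat A\,\Xi\,\hat A^\top+E_\theta E_\theta^\top$. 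Here $\hat A=A_{\mathrm{cl}}+E_\theta(\gamma^2I-E_\theta^\top P_\gamma E_\theta)^{-1}E_\theta^\top P_\gamma A_{\mathrm{cl}}$ is the saddle-point closed loop, which is Schur.

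This gives you the following:
\begin{itemize}
\item The candidate is optimal, so no Loewner-minimality lemma is needed.
\item $\Xi\succ0$ is exactly the worst-case gramian you anticipated.
\item Any optimal dual is complementary to the now-unique primal, so it has this form and the dual is unique.
\item $\operatorname{rank}S^\star_{\gamma,3}=r_x=\dim\ker M^\star$. Your ``$\operatorname{rank}S^\star+\operatorname{rank}M^\star=n$'' should instead say that the ranks sum to the full size of $M^\star$.
\item Nondegeneracy follows from Alizadeh--Haeberly--Overton for a unique, strictly complementary pair, with no separate transversality check.
\end{itemize}
As a by-product, \eqref{eq:gradHinf_fixed} evaluates to $\partial V_{\mathrm{wc}}/\partial A_\theta=2\tilde PA_{\mathrm{cl}}\Xi$, the $H_\infty$ counterpart of the LQG two-gramian formula.
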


\emph{Estimation SDP (fixed-$\gamma_e$): exact gradient.} Exactly as
\eqref{eq:Jest} is the transpose dual of \eqref{eq:Jcont} under
$(A,B,Q,R,W)\to(A^\top,C^\top,W,V,Q)$, fixing $\gamma_e$ strictly above the
optimal filter attenuation and applying the same swap to \eqref{eq:Hinf_fixed}
gives the estimation analog. That is, we also swap $(E_\theta, C_z, D_z) \to (Q^{1/2}, [\,W_\theta^{1/2}\;\,0_{r_x \times r_u}\,]^\top, [0_{r_u \times r_x}\;\,V_\theta^{1/2}]^\top)$. The optimal filter gain $G^\star$ is the transpose of that returned ($K^\star$) in the control case. Similarly, the jacobians w.r.t. $A_\theta$ and $C_\theta$ are the transpose of the form of the jacobians in \eqref{eq:Hinf_fixed} w.r.t. $A_\theta$ and $B_\theta$, respectively.

\emph{Scalable game-Riccati route.} Exactly as the $H_2$ control SDP~\eqref{eq:Jcont}
has an $O(r_x^3)$ Riccati twin ($J_{\mathrm{cont}}^\star=\operatorname{tr}(PW)$ from the
control DARE), the fixed-$\gamma$ SDP~\eqref{eq:Hinf_fixed} admits a Riccati twin that
avoids the interior-point solve entirely and hence useful at scale (but limits the kind of regularizers and constraints that can be added to the control and estimation design problems).

\section{The locally linearized case: MPC and the eKF}\label{sec:ekf}

The Bayesian filter is infinite dimensional; we approximate it by an extended Kalman
filter (eKF) \cite{anderson2012optimal,ramadan2024extended} propagating the first
two moments $x_{k\mid k},\Sigma_{k\mid k}$ (recursions in
Appendix~\ref{app:infostate}).

Concretely, with Jacobians $F_k = \partial f_\theta/\partial x|_{x_{k\mid k}}$ and
$H_k = \partial h_\theta/\partial x|_{x_{k\mid k-1}}$, the eKF alternates a
prediction and a measurement update steps similar to the Kalman filter.

If we stack the mean and the vectorized
upper-triangular elements of the state covariance into the information state vector
$\mathcal X_k = [x_{k\mid k}^\top,\, \operatorname{upvec}\Sigma_{k\mid k}^\top]^\top \in
\mathbb{R}^{n_\mathcal X}$, $n_\mathcal X = r_x + r_x(r_x+1)/2$, with lifted dynamics
induced by the (prediction-mode) eKF map (see Appendix~\ref{app:infostate} for more details)
\begin{equation} \label{eq:eKF as a dynamic system}
\mathcal{X}_{k+1}
     = \mathcal{F}_\theta(\mathcal{X}_k, u_k),
\end{equation}
and then linearize about the nominal information state (e.g. $(\mathcal{X}_0, u_0)$, since $k=0$ is current-time in receding-horizon convention)
\begin{equation*}
    \mathcal A_\theta = \left . \frac{\partial \mathcal F_\theta (\mathcal{X}, u_0)}{ \partial \mathcal{X}} \right |_{\mathcal{X} = \mathcal{X}_0},\quad \mathcal B_\theta = \left . \frac{\partial \mathcal F_\theta (\mathcal{X}_0, u)}{ \partial u} \right |_{u = u_0},
\end{equation*}
we end up in a linear system in the information state. The second-stage problem becomes $\widehat J_{sc} ^\star$, the approximate version of $J_{\mathrm{sc}}^\star$ in \eqref{eq:Jstoch_N},
\begin{equation}\label{eq:mpc}
\begin{aligned}
\widehat J^\star_{sc}(\theta) = &\min_{u_{0:N-1}}\
\frac{1}{N}\sum_{k=0}^{N-1} \mathcal L(\mathcal X_k,u_k) + \mathcal L^N(\mathcal X_N)\\
\text{s.t. } & \mathcal X_{k+1} = \mathcal A_\theta \mathcal X_k +
B_\theta u_k,\\
& \Sigma_{k\mid k} \succ 0,\quad \mathcal C(\mathcal X_{k+1},u_k) \le 0,
\end{aligned}
\end{equation}
where $\mathcal L$ now reconstructs the covariance term in $\operatorname{tr}(Q\Sigma_{k\mid k})$ from its vectorized upper-triangular values found in $\mathcal X_k$ (and similarily for the conic constraint $\Sigma_{k\mid k} \succ 0$). 
The function $\mathcal C(\cdot)$ represents the state/input constraints.

\emph{Design gradients from the dynamics costates.} Let $\lambda_k$ be the
multipliers (costates) of the dynamics equalities in \eqref{eq:mpc} in the corresponding Langrangian. By Theorem~\ref{prop:envelope}, the only explicit
$\theta$-dependence is through $\mathcal A_\theta, B_\theta$,
so
\begin{equation}\label{eq:gradmpc}
\frac{\partial \widehat J^\star_{sc}}{\partial \theta_i} = \sum_{k=0}^{N-1} \lambda_k^{\star\,\top} \Big(
\frac{\partial \mathcal A_\theta}{\partial \theta_i}\mathcal X_k^\star +
\frac{\partial \mathcal B_\theta}{\partial \theta_i} u_k^\star \Big).
\end{equation}

\begin{rem}
    If the state evolution and the measurement operator $f_\theta$ and $h_\theta$ represent a linear system, then $\mathcal{A}_\theta = \operatorname{blkdiag}\,(A_\theta, \bar A _\theta)$ (the evolution of the error covariance is independent of the state estimate) and $\mathcal{B}_\theta = [B_\theta^\top \, 0]^\top$ (the error covariance is independent of the input). This is a consequence of the separation principle between estimation and control for fixed $\theta$.
\end{rem}

\begin{rem}[The structure of the controller as a dual control]
    The MPC control \eqref{eq:mpc} is feedback in the information state $\mathcal{X}_k$, instead of $x_{k \mid k}$ only as in the SDP case. This is due to the fact that the separation principle does not hold in the nonlinear regime: the input and state estimate can have an effect on the evolution of the estimation error covariance. Hence, the resulting controller admits probing/caution effects, the characteristics of a dual control. It is possible to turn \eqref{eq:mpc} into a function of $x_{k \mid k}$ only by fixing the value of $\Sigma_{k \mid k}$ to, for example, $\lambda_{\mathrm{probing}} I$, turning $\lambda_{\mathrm{probing}} \geq 0$ into a probing (excitation for data collection) knob an operator can choose based on the circumstances.
\end{rem}

\begin{rem}[Where exactness stops: the nonlinear case]\label{rem:ekfbreaks}
Theorem~\ref{prop:envelope} applies verbatim to the SDPs, which are genuinely
convex. In the nonlinear regime it applies to the convex program \eqref{eq:mpc},
not to the underlying stochastic control value \eqref{eq:Jstoch_N}. Three gaps
separate the two: (i)~Filtering
approximation: the eKF is a finite-dimensional approximation to the exact Bayesian recursion
\eqref{eq:Tupdate} \eqref{eq:Mupdate} by its first two moments, exact only for
affine $f_\theta,h_\theta$ \cite{anderson2012optimal} (the Kalman filter case).
(ii)~the prediction-only mode of the eKF used along the MPC horizon, which drops the innovation feedback. (iii)~Further linearization: $\mathcal A_\theta,
\mathcal B_\theta$ are taken about the current-time ($k=0$) information trajectory, so
\eqref{eq:gradmpc}. Nevertheless, Section~\ref{sec:numerical} shows that in practice this approximate gradient can still be used effectively in design optimization.
\end{rem}

\begin{rem}[Sensor allocation]\label{rem:sensor}
Deciding which of the candidate sensors to deploy fits ContEst without
disturbing the validity of its gradients. (a)~Adding binary indicators $b\in\{0,1\}^{\#\text{sensors}}$ and
costs $c$ to the upper level, $\min_{\theta,b}
J_{\mathrm{sc}}^\star(\theta,b)+c^\top b$, leaves the inner problem, and hence the
envelope gradients \eqref{eq:chain}, \eqref{eq:gradmpc}, unchanged. The price is a
(NP-hard) mixed-integer program \cite{joshi2009sensor,tzoumas2016sensor}.
(b)~Relaxing $b$ to continuous gains $\alpha\ge0$ ($C_{\alpha,\theta}=\operatorname{diag}(\alpha)\,C_{\mathrm{base},\theta}$) with an $\ell_1$ penalty in the design cost,
\begin{equation}\label{eq:Jdes}
J_{\mathrm{des}}^\alpha(\theta) = J_{\mathrm{des}}(\theta) + \lambda_\alpha \| \alpha \|_1,
\quad \alpha \ge 0,
\end{equation}
promotes sparse sensing \cite{dhingra2014admm,saraf2017h2,das2017sparse}. (c)~The $\ell_1$ kink is removed
by its epigraph lift \cite{boyd2004convex}, $\lambda_\alpha \mathbf 1^\top t$ with
$\alpha\preceq t$, keeping $J$ differentiable so the BFGS guarantees of
Remark~\ref{rem:algorithm} hold. Finally, a threshold on $\alpha^\star$ recovers a binary
deployment.
\end{rem}

\begin{rem}[Outer optimization]\label{rem:algorithm}
Since the design gradient is available,
Problem~\ref{prob:two-stage} is solved by a first-order method, here a projected
BFGS iteration on $\theta$ over the box $\Theta$
\cite{nocedal2006numerical}. Because the gradients are
exact (not finite-difference), the
secant pairs driving the BFGS update are accurate. Under the usual smoothness
and local strong-convexity assumptions the outer iteration attains the local
superlinear rate of full BFGS \cite{nocedal2006numerical}. When the inner value
is only piecewise smooth (e.g.\ the $\ell_1$-selection or worst-case $H_\infty$
regimes) and the returned gradient is a subgradient, a nonsmooth BFGS variant
with a suitable line search still converges reliably in practice
\cite{lewis2013nonsmooth}.
\end{rem}

\section{On the computational complexity of ContEst with interior-point methods}
\label{sec:sim:complexity}

Because the envelope theorem (Section~\ref{sec:envelope}) yields the design
gradient from the primal and dual of a single inner solve, never by
differentiating through the solver's iterations, the per-design gradient is one inner
solve plus a matrix multiplication, and the outer BFGS iteration inherits the
complexity class of the inner solver.

\emph{Interior-point covariance/LMI solves scale as $O(r_x^6)$.} Both ContEst
inner problems place a covariance matrix among the decision variables. In the LQG
SDPs of Section~\ref{sec:lqg} the variable is the $r_x\times r_x$ covariance
$\Sigma$ (and its dual), i.e.\ $m=O(r_x^2)$ scalar unknowns; a primal-dual
interior-point method assembles and factorizes a dense Schur-complement (Newton)
system of size $m\times m$, at cost $O(m^3)=O(r_x^6)$ per iteration, times the
$O(\sqrt{r_x})$/$O(1)$ iterations interior-point methods take in practice. Empirically, solving the LQR
covariance program with the Clarabel interior-point solver took
$\approx\!2.6\,$s at $r_x=20$ and $\approx\!30\,$s at $r_x=30$: an
$(30/20)^6\!\approx\!11\times$ growth, matching the $O(r_x^6)$ prediction, so a
naive SDP inner problem caps ContEst at a few tens of states. To scale beyond that: (i) model structure and sparsity patterns can be exploited to reduce the number of nonzero entries in $\Sigma$, (ii) Krylov methods can be used within interior point implementations, instead of using direct Cholesky factorizations \cite{gondzio2012interior}.

\emph{The eKF--MPC inner problem meets the same wall.} There the decision trajectory
is the information state $\zeta=[x;\operatorname{upvec}\Sigma]$ of dimension
$r_x+\tfrac{r_x(r_x+1)}{2}=O(r_x^2)$; the horizon-$N$ MPC \eqref{eq:mpc} thus has
$O(N r_x^2)$ variables and its interior-point solve again grows steeply in $r_x$
(we measured $\approx\!33\,$s per evaluation at $r_x=32$).
Consequently the nonlinear/constrained studies of
Sections~\ref{sec:sim:distill}--\ref{sec:sim:pll} are run at modest $r_x$, and
large $r_x$ similarly requires model reduction, structure exploitation, and Krylov methods.

\section{Simulation studies}\label{sec:numerical}

We evaluate ContEst on four co-design problems spanning very different physics. All these examples were solved by the Clarabel
interior-point solver. The outer loop is the projected quasi-Newton method of
Remark~\ref{rem:algorithm} (BFGS) over the box $\Theta$, run from $5$ initial
points (the baseline design $\theta_{\mathrm{nom}}$ plus four uniform random
restarts in $\Theta$) keeping the best minimizer. The envelope-theorem gradients are exact, so we do not verify them
per study, but as a one-time sanity check we confirmed that
the analytic gradient matches central finite differences to a relative error below
$10^{-4}$.\footnote{ContEst obtains the exact design gradient at essentially the cost of a
single inner solve, while forward (central) differences need $r_\theta+1$ ($2r_\theta$)
inner solves for an $r_\theta$-vector $\theta$, for an approximate gradient.}

Code reproducing every study is publicly
available.\footnote{\url{https://github.com/msramada/ContEst}.}

For the studies in which linearization takes place, we report Monte-Carlo estimates of the realized closed-loop cost to confirm each optimized
design improves the true cost, not merely the surrogate problem. Starting from independent initial states sampled from $p_0$ and
with sampled process/measurement noise,
\begin{equation}\label{eq:costsplit}
\begin{aligned}
J^{MT}_{\mathrm{sc}} &= \frac{1}{M}\sum_{m=1}^M\sum_{k}^T\!\big(\|x_k\|_Q^2+\|u_k\|_R^2\big),\\
J^{MT}_{\mathrm{est}} &= \frac{1}{M}\sum_{m=1}^M\sum_{k}^T\|x_k-\hat x_{k\mid k}\|_Q^2,
\end{aligned}
\end{equation}
with $M$ the number of Monte-Carlo sample trajectories and $T$ each trajectory's
horizon length. These three studies use $M=100$ sample
trajectories over a horizon of $T=30$ steps, while the receding-horizon
MPC itself plans over $N=12$ future steps.

\subsection{Spacecraft attitude determination and control (ADCS)}
\label{sec:sim:adcs}

Pointing accuracy on every space mission is ultimately set by both the attitude estimator (the sensor suite: star trackers, gyros)
and the reaction-wheel authority. On a spacecraft these compete for a shared,
tightly constrained power/mass budget: a heavier, higher-power star tracker
or gyro reports lower noise, and a larger wheel delivers more torque, but every
watt and kilogram spent on one is unavailable to the others. We pose this on the
steady-state SDP path of Sections~\ref{sec:lqg} and \ref{sec:hinf}: $H_2$
state-feedback control under a hard cap on the steady-state control-effort
covariance, paired with worst-case ($H_\infty$) robust estimation, the
actuation and sensing axes tied by the shared budget.

A rigid satellite with inertia matrix
$J=\operatorname{diag}(4,6,5)\,\si{kg.m^2}$ regulates attitude and rate to zero
about its pointing equilibrium.
The state $x=[\phi,\theta_a,\psi,\omega_x,\omega_y,\omega_z]$ ($r_x=6$) is attitude and body rates; the input $u$ is three reaction-wheel
torques ($r_u=3$); the outputs are a star tracker (attitude) and a rate gyro
($r_y=6$, $C=I_6$). The only nonlinearity is the gyroscopic coupling $\omega\times
(J\omega)$, whose Jacobian vanishes at the pointing equilibrium $\omega=0$;
a single linearization there yields three decoupled double integrators.
With $\Delta t=\SI{0.1}{s}$,
\begin{subequations}\label{eq:adcs-model} (the $+$ superscript denotes the next time-step in a recursion)
\begin{align}
(\phi,\theta_a,\psi)^{+} &= (\phi,\theta_a,\psi) + \Delta t\,(\omega_x,\omega_y,\omega_z),\\
\omega^{+} &= \omega + \Delta t\,J^{-1}\big(e_{\mathrm{rw}}\,u - \omega\times(J\omega)\big),\\
h_\theta &= \big[\phi,\theta_a,\psi, \omega_x,\omega_y,\omega_z\big]^\top,
\end{align}
\end{subequations}
with $V_\theta=\operatorname{diag}(v_0/\alpha_{\mathrm{st}}^2\,I_3,\,v_0/\alpha_{\mathrm{gyro}}^2\,I_3), \, v_0=0.20$
(better precision $\Rightarrow$ lower variance), with
$W=\operatorname{diag}(10^{-6}I_3,10^{-4}I_3)$, and cost weights
$Q=\operatorname{diag}(4I_3,I_3)$ (attitude weighted four times more heavily than
rate, so the star tracker is intrinsically more valuable than the gyro) and
$R=0.1\,I_3$ (actuation is bounded by a hard effort cap, below, rather than by $R$).
The three design parameters are reaction-wheel authority $e_{\mathrm{rw}}$, which
scales the input and so enters $B_\theta$, and the star-tracker and rate-gyro precisions
$\alpha_{\mathrm{st}},\alpha_{\mathrm{gyro}}$, which enter the measurement
covariance $V_\theta$.

The control side is priced by the $H_2$ covariance SDP~\eqref{eq:Jcont} (dense,
$r_x=6$) in the stationary state covariance $\Sigma$, the input-second-moment
surrogate $Z_0$, and $L=K\Sigma$: its value
$J_{\mathrm{cont}}(\theta)=\operatorname{tr}(Q\Sigma)+\operatorname{tr}(RZ_0)$
equals the LQG cost, with the design gradient in $e_{\mathrm{rw}}$ read as an exact
gradient from the LMI dual~\eqref{eq:gradAB} (no solver differentiation). We add one
\emph{hard} constraint, $\operatorname{tr}(Z_0)\le u_{\max}$: since
$Z_0\succeq K\Sigma K^\top$ is exactly the stationary control-effort covariance (the
steady-state input second moment), this caps the steady-state control-effort
covariance directly (a linear LMI that an LQR/Riccati weighting cannot
express, because $R$ can only price effort, not bound it). We set
$u_{\max}=0.6\,\operatorname{tr}(Z_0^{\mathrm{unc}})$ (60\% of the unconstrained
baseline effort). The sensing side is priced by the fixed-$\gamma$
$H_\infty$ robust-filter SDP at attenuation
$\gamma^2=4$, chosen as five times the minimal feasible $\gamma^2_{\min}=0.8$ so the
BRL LMI is feasible and its dual nondegenerate
(Corollary~\ref{cor:hinf_fixed}). Because $e_{\mathrm{rw}}$ enters only $B_\theta$ and the precisions only $V_\theta$, the two inner
values are coupled solely through the shared budget in the design cost (a
design-level) rather than dynamics-level coupling.

The design cost is a shared budget $c_b(e_{\mathrm{rw}}+\alpha_{\mathrm{st}}+
\alpha_{\mathrm{gyro}}-B)^2$ ($c_b=5$, $B=3$), with box
$e_{\mathrm{rw}}\in[0.3,3]$, $\alpha_{\mathrm{st}},\alpha_{\mathrm{gyro}}\in[0.3,5]$,
and baseline $\theta_{\mathrm{nom}}=\mathbf 1$ (budget spent uniformly). On the SDP
path actuation is bounded by the hard control-effort-covariance cap
$\operatorname{tr}(Z_0)\le u_{\max}$ rather than by pointwise input constraints on
the trajectory.

All five starts converged to the same $\theta$.
The co-design (Table~\ref{tab:adcs}) reallocates the budget:
it upgrades the star tracker most ($\alpha_{\mathrm{st}}\!:\,1\to1.70$), holds wheel
authority essentially fixed ($e_{\mathrm{rw}}\!:\,1\to1.005$ (the hard effort cap
saturates the return on actuation, so extra authority buys nothing), and drives the lightly weighted
rate gyro to its floor ($\alpha_{\mathrm{gyro}}\!:\,1\to0.30$), spending that budget
where it pays most. Improvements in control, estimation and design costs are reported in Table~\ref{tab:formulations}.

\begin{table}[t]
\centering
\caption{Spacecraft ADCS ($H_2$ control under a control-effort-covariance cap $+$
$H_\infty$ robust-estimation SDP, single
linearization about the pointing equilibrium): baseline (uniform budget) vs.\
co-designed allocation of the shared power/mass budget
$e_{\mathrm{rw}}+\alpha_{\mathrm{st}}+\alpha_{\mathrm{gyro}}\approx B=3$.}
\label{tab:adcs}
\begin{tabular}{llcc}
\hline
parameter & axis & baseline & optimal \\
\hline
$e_{\mathrm{rw}}$       & $f$ (reaction wheels) & 1.000 & 1.005 \\
$\alpha_{\mathrm{st}}$  & $V$ (star tracker)    & 1.000 & 1.703 \\
$\alpha_{\mathrm{gyro}}$& $V$ (rate gyro)       & 1.000 & 0.300 \\
\hline
\end{tabular}
\end{table}

\subsection{Binary distillation column}
\label{sec:sim:distill}

Distillation is the workhorse separation of the process
industries and its control is dominated by two classical layout decisions: on
which tray to introduce the feed, and on which tray to place the
inferential (temperature) sensor that is used to recover the composition
profile. These locations shape $f_\theta$ and $h_\theta$, so this is a ContEst problem.

A twelve-stage column (stage~1 reboiler
until stage~12 condenser) separates a light/heavy mixture; the state
$x=[\Delta x_1,\dots,\Delta x_{12}]$ ($r_x=12$) collects stage light-component
mole-fraction deviations about the nominal profile, in deviation form so
$f_\theta(0,0)=0$ for every $\theta$. The nonlinearity enters the dynamics twice: the vapour-liquid equilibrium
$y(x)=\alpha x/(1+(\alpha-1)x)$ ($\alpha=2.5$) and the bubble-point temperature
$T(x)=T_{B0}-\Delta T_B\,y(x)$. The single input is the reboiler boil-up
deviation $u=\Delta \bar V$ ($r_u=1$), limited to $|\Delta \bar V|\le1.5$; the measurements
($r_y=3$) are three Gaussian-weighted stage temperatures read at the sensor
locations. Integration is at $\SI{0.05}{s}$; the receding-horizon MPC plans over
$N=12$ steps. The design carries
$N_f=3$ feed-stage locations $\theta_f=(f_1,f_2,f_3)$ and $N_s=3$
temperature-sensor locations $\theta_h=(s_1,s_2,s_3)$, all in the interval $[2,11]$. Placement is nearly free to relocate,
so $J_{\mathrm{des}}$ carries only a mild regularisation
$\lambda_{\mathrm{des}}\lVert\theta-\theta_{\mathrm{nom}}\rVert^2$ about the
default $\theta_{\mathrm{nom}}$ with feeds at $(3,4.5,6)$ and sensors at
$(2,3.5,5)$.

With equilibrium
deviations $\Delta y_i=y(x_i^{\mathrm{ss}}+\Delta x_i)-y_i^{\mathrm{ss}}$, for
$i=1,\dots,12$ and $j=1,\dots,3$,
\begin{subequations}\label{eq:distill-model}
\begin{align}
\Delta x_i^{+} &= \Delta x_i + \tfrac{\Delta t}{\bar M}\big(\mathrm{liq}_i+\mathrm{vap}_i\big),\\
\mathrm{liq}_i &= L_i(\theta_f)\,(\Delta x_{i+1}-\Delta x_i),\\
\mathrm{vap}_i &= (\bar V+u)\,(\Delta y_{i-1}-\Delta y_i) + u\,(y_{i-1}^{\mathrm{ss}}-y_i^{\mathrm{ss}}),\\
h_{\theta,j} &= \textstyle\sum_{i=1}^{12}\hat q_i(s_j)\,\big(T(x_i^{\mathrm{ss}}+\Delta x_i)-T_i^{\mathrm{ss}}\big),
\end{align}
\end{subequations}
where the superscript $(\cdot)^{\mathrm{ss}}$ denotes the nominal steady-state
operating point about which the model is written in deviation form. The
boundary terms $\Delta x_{13}=\Delta y_0=y_0^{\mathrm{ss}}=0$, feed weights
$q_l(\theta_f)\propto \sum_{j=1}^{3}e^{-(l-f_j)^2/2\sigma_f^2}$ (normalized)
setting the internal liquid traffic
$L_i(\theta_f)=L_{\mathrm{ref}}+F\sum_{l>i}q_l$, and sensor weights
$\hat q_i(s_j)\propto e^{-(i-s_j)^2/2\sigma_{\mathrm{sens}}^2}$. Constants $\alpha=2.5$,
$L_{\mathrm{ref}}=2$, $\bar V=2.5$, $F=2.5$, $\bar M=1$,
$\sigma_f=\sigma_{\mathrm{sens}}=1.2$, $T_{B0}=\SI{380}{K}$,
$\Delta T_B=\SI{40}{K}$. The noise covariances $W=2\!\times\!10^{-5}I_{12}$,
$V=0.6\,I_3$ ($\si{K^2}$), and prior $\Sigma_0=10^{-2}I_{12}$. The cost weights
$Q_{ii}=20+30\,e^{-(i-6.5)^2/2\sigma_Q^2}$ ($\sigma_Q=2.4$, heaviest
mid-column), $R=2$,
$|u|_\infty\le1.5$, with a mild placement regularization
$\lambda_{\mathrm{des}}\lVert\theta-\theta_{\mathrm{nom}}\rVert^2$.

The design landscape is multimodal: the five-start BFGS converge to three distinct
minima, and the baseline start alone converges to an inferior one, so the random
restarts are what recover the reported optimum. ContEst spreads the feed
streams: two settle lower toward the reboiler while
the third climbs high up the column, and raises the three
temperature sensors out of the bottom region across the mid-and-upper trays, covering the composition profile rather than clustering low (Table~\ref{tab:distill}). Improvements on the control, estimation and design costs are shown in Table~\ref{tab:formulations}.

\begin{table}[t]
\centering
\caption{Binary distillation: naive baseline vs.\ co-designed placements. All six
parameters are continuous tray locations (not gains); baseline $\theta_{\mathrm{nom}}$
places feeds at $(3,4.5,6)$ and sensors at $(2,3.5,5)$.}
\label{tab:distill}
\begin{tabular}{llcc}
\hline
parameter & axis & baseline & optimal \\
\hline
$f_1$ & $f$ (feed stage)    & 3.000 & 2.479 \\
$f_2$ & $f$ (feed stage)    & 4.500 & 3.584 \\
$f_3$ & $f$ (feed stage)    & 6.000 & 9.018 \\
$s_1$ & $h$ (temp.\ sensor) & 2.000 & 2.972 \\
$s_2$ & $h$ (temp.\ sensor) & 3.500 & 5.428 \\
$s_3$ & $h$ (temp.\ sensor) & 5.000 & 8.010 \\
\hline
\end{tabular}
\end{table}

\subsection{PLL/DSE tuning in a low-inertia grid}
\label{sec:sim:pll}

In this example we run ContEst as an online adaptive tool to co-design (tune) the state estimator itself: the per-inverter
dynamic state estimation (DSE) and the damping authority. Every
grid-following (GFL) inverter estimates grid frequency and phase with a
phase-locked loop (PLL), whose bandwidth is a classic bias-variance knob: a
narrow-band PLL is smooth but sluggish and lags real frequency excursions, while
a wide-band PLL tracks fast but passes more measurement noise into the frequency
estimate. In low-inertia, inverter-dominated grids, this tuning is decisive
for stability and for fast frequency response \cite{zhao2019power}. The PLL
bandwidth is therefore an estimator design variable that classical control
co-design cannot represent and sensor placement cannot capture.

Three GFL-inverter buses form a low-inertia chain. The state
$x=[\Delta\delta_{1:3},\Delta\omega_{1:3},\Delta\hat\omega^{\mathrm p}_{1:3}]$
($r_x=9$) stacks bus angle deviations, frequency deviations, and the per-bus PLL frequency estimate $\Delta\hat\omega^{\mathrm p}_i$; the input
$u=[\Delta P_{1:3}]$ is the inverter damping power ($r_u=3$). Each bus is measured
by a phasor measurement unit (PMU) angle channel and its PLL frequency channel
($r_y=6$). The genuine
nonlinearity is the synchronizing power $\propto\sin(\delta^{\mathrm{eq}}_{ij}+
\Delta\delta_i-\Delta\delta_j)$, in deviation form so $f_\theta(0,0)=0$ for
all $\theta$; integration is at $\SI{0.05}{s}$, the MPC plans over $N=12$ steps ahead. The design parameters
are the per-bus inverter damping effectiveness
$\theta_f=(e_1,e_2,e_3)\in[0.3,3]^3$ and the per-bus PLL bandwidths
$\theta_h=(b_1,b_2,b_3)\in[2,20]\,\si{rad/s}$.

Since the PLL bandwidths and damping gains are software values, and the useful allocation is operating-point dependent (the localized disturbance
fixes which bus is fastest and least certain), the low per-solve cost of
ContEst makes it natural to be used as an event-triggered adaptation layer that automatically tunes system parameters after sudden changes in the environment, such as the following scenario.

First, under normal,
undisturbed operation (equilibrium, no post-fault kick, and a uniform
quiet-level prior $\Sigma_0$ at every bus), ContEst is run once from the naive
default $\theta_{\mathrm{nom}}$ ($e_i=1$, uniform $b_i=6$), producing a
peacetime design $\theta_{\mathrm{peace}}$. Then a sudden event
concentrates a disturbance at bus~1 (large post-fault kick and an elevated
prior $\Sigma_0$ there), while buses~2,3 stay `quiet'; this second trigger
re-solves the same co-design problem, now started from
$\theta_{\mathrm{peace}}$ rather than from $\theta_{\mathrm{nom}}$, producing
the post-fault design $\theta^\star$. The automatic design/tuning
must trade tracking speed against readout noise per bus in both triggers.

For $i=1,2,3$,
\begin{subequations}\label{eq:pll-model}
\begin{align}
&\Delta\delta_i^{+} = \Delta\delta_i + \Delta t\,\Delta\omega_i,\\
&\Delta\omega_i^{+} = \Delta\omega_i + \tfrac{\Delta t}{2H_i}
   \big(\!-P_i^{\mathrm e}(\Delta\delta) - D_i\Delta\omega_i + e_i u_i\big),\\
&\Delta\hat\omega^{\mathrm p,+}_i = \Delta\hat\omega^{\mathrm p}_i
   + \Delta t\,b_i\,(\Delta\omega_i - \Delta\hat\omega^{\mathrm p}_i),\\
&h_\theta = \big[\,\Delta\delta_j;\ \Delta\hat\omega^{\mathrm p}_j\,\big]_{j=1}^{3},
\end{align}
\end{subequations}
where $P_i^{\mathrm e}(\Delta\delta)=\sum_{j\neq i}K_{ij}\big[\sin(\Delta^{\mathrm{eq}}_{ij}
+\Delta\delta_i-\Delta\delta_j)-\sin\Delta^{\mathrm{eq}}_{ij}\big]$, with
$\Delta^{\mathrm{eq}}_{ij}=\delta^{\mathrm{eq}}_i-\delta^{\mathrm{eq}}_j$ the
equilibrium angle difference, is the synchronizing
power. Here $H=(1.6,2.2,2.0)\,\si{s}$ are the (low) inertia constants,
$D=(0.4,0.6,0.5)$ the natural dampings, $\delta^{\mathrm{eq}}=(0.15,-0.05,0.10)\,
\si{rad}$ the equilibrium angles, and the line synchronizing coefficients are
$K_{12}=1.2$, $K_{23}=1.1$, $K_{13}=0.4$ (an open $1$--$2$--$3$ chain with a weak
$1$--$3$ tie). The third line is the first-order PLL tracking the true
frequency with bandwidth $b_i$. The process noise covariance $W=\operatorname{blkdiag}
(2\!\times\!10^{-5}I_3,5\!\times\!10^{-4}I_3,5\!\times\!10^{-4}I_3)$ and the prior
$\Sigma_0$ (diagonal) large entry (1,1) at bus~1 ($3\!\times\!10^{-2}$), small elsewhere
($6\!\times\!10^{-3}$). The cost weights $Q=\operatorname{blkdiag}(15I_3,120I_3,0_3)$
(weighting the true angle and, heavily, the true frequency; PLL states
unweighted), $R=I_3$, $|u|_\infty\le1.5$. The design cost 
$J_{\mathrm{des}}=\lambda_e\sum_i(e_i-1)^2+\lambda_b\sum_i(b_i-b_{\mathrm{nom}})^2$,
$\lambda_e=0.5$, $\lambda_b=0.02$, $b_{\mathrm{nom}}=6$. The design coupling is
that the PLL bandwidth $b_i$ enters both $f_\theta$ (the tracking line above)
and the measurement covariance, which is design-dependent,
\begin{equation}\label{eq:pll-V}
V_\theta=\operatorname{blkdiag}\!\big(\sigma_\delta^2 I_3,\
\operatorname{diag}_i\,\sigma_\omega^2(1+\kappa\,b_i)\big),
\end{equation}
$\sigma_\delta^2=4\!\times\!10^{-3}$, $\sigma_\omega^2=2\!\times\!10^{-3}$,
$\kappa=0.30$: a wider-band PLL tracks faster but yields a noisier frequency
readout---the bias--variance knob $\theta_h$ resolves.

Under normal operation all five starts converge
to the same minimizer $\theta_{\mathrm{peace}}$. With little disturbance for
the actuator to counteract, the inverter damping is left exactly at its
baseline ($e_i=1$), while every PLL
bandwidth is raised uniformly to trim ambient estimation error
($b_i\!:\,6\to10.6$). This cuts $J_{\mathrm{est}}$ by $12.1\%$
($38.0\to33.4$) and $J_{\mathrm{cont}}$ by only $4.2\%$ ($83.6\to80.1$), for a
$5.6\%$ reduction in the total $J_{\mathrm{des}}$ ($121.5\to114.7$).

\emph{Trigger (post-fault).} The bus-1 fault re-triggers ContEst. All five starts, including one seeded at $\theta_{\mathrm{peace}}$ itself, converged to the same $\theta^\star$. ContEst reallocates sharply: it
raises the disturbed bus's PLL bandwidth further ($b_1\!:\,10.6\to13.7$) while
leaving the quiet buses' bandwidths essentially where trigger 1 left them
($b_{2,3}\!:\,10.6\to10.6$), and (now that there is a real disturbance to
counteract) unlocks the inverter damping authority for the first time
($e_1\!:\,1\to3.0$, $e_{2,3}\!:\,1\to1.95,2.19$; none of the bandwidths reach
the upper bound of $20$. Results are summarized in Table~\ref{tab:formulations}.

\begin{table}[t]
\centering
\caption{Low-inertia grid (PLL/DSE): naive baseline $\theta_{\mathrm{nom}}$,
trigger-1 peacetime design $\theta_{\mathrm{peace}}$ (ContEst run under normal
operation), and trigger-2 post-fault design $\theta^\star$ (ContEst re-run,
started from $\theta_{\mathrm{peace}}$, once the bus-1 fault hits). The PLL
bandwidths $b_i$ enter both $f_\theta$ and the measurement covariance
$V(\theta)$.}
\label{tab:pll}
\begin{tabular}{llccc}
\hline
parameter & axis & $\theta_{\mathrm{nom}}$ & $\theta_{\mathrm{peace}}$ & $\theta^\star$ \\
\hline
$e_1$ & $f$ (inverter damping) & 1.000 & 1.000 & 3.000 \\
$e_2$ & $f$ (inverter damping) & 1.000 & 1.000 & 1.951 \\
$e_3$ & $f$ (inverter damping) & 1.000 & 1.000 & 2.191 \\
$b_1$ & $h,V$ (PLL bandwidth)  & 6.000 & 10.578 & 13.718 \\
$b_2$ & $h,V$ (PLL bandwidth)  & 6.000 & 10.569 & 10.567 \\
$b_3$ & $h,V$ (PLL bandwidth)  & 6.000 & 10.575 & 10.575 \\
\hline
\end{tabular}
\end{table}

\subsection{Multi-terminal HVDC: sparse sensing}
\label{sec:sim:mtdc}

Multi-terminal HVDC (MTDC) grids regulate DC-bus voltage with droop
control: each converter reacts to its own local voltage deviation by
modulating its power injection, a fixed-structure, decentralized, proportional
law (the DC-grid analogue of frequency droop) rather than a freshly
synthesized feedback gain. Droop needs no communication, but an aggressive
droop setting on a meshed grid can excite a lightly-damped
inter-terminal resonance, a worst-case (not average-case) failure mode
that $H_\infty$ prices directly. On the sensing side, not every terminal needs
a telemetered voltage sensor: the ring couples neighbouring buses, so an
unsensed terminal's converter current can be reconstructed from its
neighbours' voltages, trading estimation quality against a hard per-sensor
telemetry cost. This is the same $H_2$/$H_\infty$ SDP machinery of
Sections~\ref{sec:lqg} and \ref{sec:hinf}, at network scale ($r_x=30$) and under
an explicit sparsity pattern that reflects the physical topology, exercising
a structure-exploitation strategy to reduce the computational complexity of each solve.

A ring of $15$ terminals has state $x=[\Delta V_i,\Delta I_i]_{i=1}^{15}$
($r_x=30$, bus-voltage and converter-current deviations), input $u$ the
per-terminal supplementary power-reference modulation ($r_u=15$), and output
$y$ the (possibly gain-scaled or absent) telemetered bus voltages
($r_y\le15$). With $\Delta t=\SI{5}{ms}$ and terminal $i$'s ring neighbours
$(i^-,i^+)$,
\begin{subequations}\label{eq:mtdc-model}
\begin{align}
\Delta V_i^+ &= \Delta V_i + \tfrac{\Delta t}{C_i}\big[G_{\mathrm{dc}}(\Delta V_{i^-}\!+\!\Delta V_{i^+}\!-\!2\Delta V_i) + \Delta I_i\big] + w_i,\\
\Delta I_i^+ &= \Delta I_i + \Delta t\,\big[u_i - \Delta I_i - k_0\theta_i\,\Delta V_i\big]/\tau_i,\\
h_{\theta,i} &= \alpha_i\,\Delta V_i,
\end{align}
\end{subequations}
where $\theta_i=k_i$ is the droop gain at terminal $i$ (entering the
$f_\theta$ axis) and $\alpha_i\in[0,1]$ is a per-terminal sensor gain
(entering $C$, the $h_\theta$ axis; $\alpha_i=0$ removes the sensor). Line
conductance $G_{\mathrm{dc}}=5$, droop leverage $k_0=12$, and heterogeneous
per-terminal capacitances $C_i$, converter lags $\tau_i$, and infeed
intensities $\sigma_i$ (driving the process-noise variance on $\Delta V_i$) are
fixed physical constants. Cost weights penalize voltage deviation far above
current ($Q=\operatorname{diag}(60,0.2,\dots)$, alternating per terminal),
$R=2I_{15}$, and a small per-sensor measurement variance $v=10^{-3}$.

\emph{Control: block-diagonal $H_\infty$.} $\theta$ enters $A_\theta$, so its
worst-case cost is priced by the fixed-$\gamma$ BRL SDP~\eqref{eq:Hinf_fixed}
at $\gamma^2=20$. A dense $(Y,L)$ solve is the $O(r_x^6)$ wall of
Section~\ref{sec:sim:complexity} in practice (1380 variables,
$\approx\!226\,$s); restricting $Y,L$ to a block-diagonal
($2\times2$ per terminal) pattern cuts this to $\approx\!0.4\,$s, results in a $+54.9\%$ conservatism cost in the reported
worst-case value but a clean, strictly complementary dual.

\emph{Sensing: banded $H_2$ with $\ell_1$ sparsity penalty.} Here, the estimation
SDP~\eqref{eq:Jest} is restricted to the banded
(ring-neighbour) pattern that does mirror the physical topology (instead of the block-diagonal case as in the control case above), it matches
the full dense SDP's true cost $\operatorname{tr}(Q\Sigma^\varepsilon)$ to
within $+0.06\%$ at a fraction of the variables, and its dual gives the exact
gradient in both $\theta$ (through $A_\theta$) and the sensor gains $\alpha$ (through
$C$). An $\ell_1$ penalty $\lambda_s\lVert\alpha\rVert_1$ ($\lambda_s=0.05$)
in the design cost promotes sparse sensor placement, in the smooth epigraph
form of Section~\ref{sec:sim:adcs}'s reasoning (here $\alpha\ge0$ already, so
the penalty is simply linear).

Because $\theta$ enters both inner values through $A_\theta$, the two axes are
dynamically coupled; because $\alpha$ enters only $C$, the joint search
alternates a $\theta$-step (fixed $\alpha$, both SDPs) and an $\alpha$-step
(fixed $\theta$, estimation SDP only), each well-conditioned, rather than one
monolithic 30-variable solve. The design cost also prices droop stress,
$c_k\sum_i\theta_i$ ($c_k=5\times10^{-3}$), with box
$\theta_i\in[0.5,4]$, $\theta_{\mathrm{nom}}=3$ (uniform droop, all 15 sensors
on).

All five starts converged to a design within $1.5\%$ of each other; the best
retains 12 of 15 sensors (dropping terminals 2, 6, 8) and spreads droop over
$\theta^\star\in[0.89,1.67]$, well below the uniform baseline. This cuts the
estimation cost $J_{\mathrm{est}}$ by $31.8\%$ ($0.650\to0.443$) and the
control cost $J_{\mathrm{cont}}$ by $33.4\%$ ($2.476\to1.650$), for a $33.0\%$
reduction in $J_{sc}=J_{\mathrm{cont}}+J_{\mathrm{est}}$ ($3.126\to2.093$) and,
after the droop-stress and sensor-sparsity price, a $21.6\%$ reduction in the
total $J_{\mathrm{des}}$ ($3.351\to2.626$). As in the other studies, these are
the exact steady-state SDP/Lyapunov values, not Monte-Carlo estimates: unlike
ADCS's single linearization or the eKF-MPC studies' receding-horizon surrogate,
this plant is linear by construction with no unmodeled dynamics, so there is
no surrogate-vs-truth gap for a Monte-Carlo check to reveal.

\begin{table}[t]
\centering
\caption{MTDC sensor-count frontier: fixed-droop (baseline $\theta_{\mathrm{nom}}=3$)
greedy sensor pruning vs.\ co-designing the droop at each retained-sensor
count (same sensor sets both ways, so the gap isolates the co-design gain).}
\label{tab:mtdc-frontier}
\small
\begin{tabular}{lcccc}
\hline
& \multicolumn{2}{c}{$J_{\mathrm{est}}$} & \multicolumn{2}{c}{$J_{\mathrm{cont}}$} \\
\#sensors & pruned & co-des. & pruned & co-des. \\
\hline
15 & 0.650 & 0.392 & 2.476 & 1.596 \\
9  & 0.770 & 0.460 & 2.476 & 1.645 \\
6  & 0.927 & 0.560 & 2.476 & 1.645 \\
4  & 1.082 & 0.653 & 2.476 & 1.644 \\
3  & 1.170 & 0.697 & 2.476 & 1.645 \\
\hline
\end{tabular}
\end{table}

Table~\ref{tab:mtdc-frontier} is the headline result: co-designing the droop
and sensor budget beats naive pruning at every count. The co-designed droop with only \textbf{4}
sensors ($J_{\mathrm{est}}=0.653$) already is close to the fixed-droop baseline's
\textbf{15}-sensor value ($J_{\mathrm{est}}=0.650$), i.e.\ $73\%$ of the
voltage telemetry is removed with almost no loss in estimation quality, purely by
retuning the actuation-side droop. A sensor-selection method with no actuation
axis \cite{joshi2009sensor,tzoumas2016sensor} cannot pose this trade: droop
reshapes the dynamics the estimator has to track, so a better-tuned
droop makes the remaining sensors' information go further.

Table~\ref{tab:formulations} consolidates the four studies (dimensions, the design
split $\theta=(\theta_f,\theta_h)$, the inner value, and
the reductions); the absolute baseline$\to$optimum costs are reported compactly with
each study above.

\begin{rem}
A caveat on interpretation: the baseline parameter values $\theta_{\mathrm{nom}}$
above were chosen to be plausible rather than carefully hand-tuned, so the
reported reductions should not be read as ContEst outperforming a state-of-the-art
design. The contribution is not a competing tuning heuristic but the exact,
dual-based design gradient that lets a first-order method search for the
joint optimum.  
\end{rem}

\begin{table*}[t]
\centering
\caption{Consolidated view of the four studies}
\label{tab:formulations}
\begin{tabular}{lcclcccc}
\hline
study & $(r_x,r_u,r_y)$ & $\theta$: $n_f{+}n_h$ (enters) & inner value
& $J_{\mathrm{est}}\!\downarrow$ & $J_{\mathrm{sc}}\!\downarrow$ & $J_{\mathrm{tot}}\!\downarrow$\\
\hline
ADCS & $(6,3,6)$   & $1{+}2$: $f_\theta$, $V_\theta$ & $H_2^{\mathrm{u\,capped}} + H_\infty$ & $42.7\%$ & $18.3\%$  & $\mathbf{34.0\%}$\\
Distillation      & $(12,1,3)$  & $3{+}3$: $f_\theta$, $h_\theta$ & eKF-MPC & $35.3\%$ & $5.6\%$  & $\mathbf{9.5\%}$\\
PLL/DSE    & $(9,3,6)$   & $3{+}3$: $f_\theta$, $(f_\theta,V_\theta)$ & eKF-MPC & $17.6\%$ & $39.5\%$ & $\mathbf{33.9\%}$\\
MTDC & $(30,15,\le\!15)$ & $15{+}15$: $f_\theta$, $h_\theta$ & $H_\infty^{\mathrm{blockdiag}} + H_2^{\mathrm{banded}}$ & $31.8\%$ & $33.4\%$ & $\mathbf{21.6\%}$\\
\hline
\end{tabular}
\end{table*}

\section{Conclusion}\label{sec:conclusion}

We introduced ContEst, a co-design framework that brings state and parameter
estimation into control co-design by optimizing system and sensing parameters against a stochastic (dual-control) objective
expressed by the information state. The inner value takes a convex form in
two regimes: a pair of semidefinite programs in the average-case LQG ($H_2$)
setting or their worst-case $H_\infty$ analog (a bounded-real-lemma semidefinite program). Sensor selection enters as a
mixed-integer problem with a sparse $\ell_1$ relaxation whose duals price the
essential sensors. In every case the design gradient is obtained exactly and cheaply
from the inner dual variables, the LMI duals, by the envelope
theorem, sidestepping both implicit differentiation of the solution map and
differentiation of the KKT system. 

Our nonlinear extension relies on the eKF as the approximate Bayesian filter, applies further linearizations of the information state dynamics to achieve a convex MPC in these dynamics, and the design 'gradient' is then constructed from the returned dynamics costates. While this 'gradient' proved helpful in our numerical examples, the multiple layers of approximations and linearizations used in its construction mean that it is not an exact gradient.

Future work includes Bayesian filters beyond the eKF and the handling of the nonlinear problem with less approximations, or quantifying some error bounds on the returned gradients achieved under approximation. We also target including chance-constrained and scenario-based inner problems, and the decomposition of the outer problem along multidisciplinary-design-optimization architectures for large interconnected
systems.


\appendix

\section{The information state and its LQG and eKF specializations}\label{app:infostate}

The second stage is a stochastic (dual) control problem
\cite{feldbaum1960dual,bar1974dual}. A causal law must be a function of the
available information $\mathcal{Z}_k$, not of the unobserved $x_k$; the
sufficient statistic is the filtering density $p(x_k \mid \mathcal{Z}_k;\theta)$,
governed by the Bayesian filter, which alternates a time update,
\begin{equation}\label{eq:Tupdate}
p(x_{k+1} \mid \mathcal{Z}_k, u_k) = \int p(x_{k+1} \mid x_k, u_k;\theta)\,
p(x_k \mid \mathcal{Z}_k)\, dx_k,
\end{equation}
and a measurement update,
\begin{equation}\label{eq:Mupdate}
p(x_{k+1} \mid \mathcal{Z}_{k+1}) \propto p(y_{k+1} \mid x_{k+1};\theta)\,
p(x_{k+1} \mid \mathcal{Z}_k, u_k).
\end{equation}
Both \eqref{eq:Tupdate}--\eqref{eq:Mupdate} involve the output
equation~\eqref{eq:ss-output} and hence the sensing parameters in $\theta$. The
standard smoothing identity makes this explicit in the cost.

\begin{lem}[Smoothing {\cite[Ch.~10]{resnick2019probability}}]\label{lem:smoothing}
For a probability space $(\Omega,\mathcal{A},\mathbf{P})$, a measurable
$\chi:\Omega\to\mathbb{R}$ with $\mathbb{E}_{\mathbf P}|\chi|<\infty$, and
$\sigma$-algebras $\mathcal{A}_0 \subset \mathcal{A}_1 \subset \mathcal{A}$,
$\mathbb{E}_{\mathbf P}\{\chi \mid \mathcal{A}_1\} =
\mathbb{E}_{\mathbf P}\{\mathbb{E}_{\mathbf P}\{\chi \mid \mathcal{A}_0\} \mid
\mathcal{A}_1\}$.
\end{lem}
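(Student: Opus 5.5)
The plan is to verify the defining property of conditional expectation directly. Write $\eta := \mathbb{E}_{\mathbf P}\{\chi \mid \mathcal{A}_0\}$. I will show that $\eta$ has the two properties that characterize $\mathbb{E}_{\mathbf P}\{\chi\mid\mathcal A_1\}$ when tested against $\mathcal A_1$, and then appeal to almost-sure uniqueness.

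Before starting, the nesting has to be fixed in the direction that makes the identity meaningful. Take the literal reading in which $\mathcal A_0$ is the coarser field. Then $\eta$ is already $\mathcal A_1$-measurable, so the right-hand side collapses to $\eta$. The statement would then assert $\mathbb{E}\{\chi\mid\mathcal A_1\}=\mathbb{E}\{\chi\mid\mathcal A_0\}$, which fails in general: take $\mathcal A_0$ trivial and $\mathcal A_1=\sigma(\chi)$ with $\chi$ non-constant. The identity as displayed is the tower property, and it holds when the outer conditioning field is the coarser one. In the filtering use of the lemma, the outer field is $\sigma(\mathcal Z_k)$ and the inner field is $\sigma(\mathcal Z_{k+1})\supseteq\sigma(\mathcal Z_k)$. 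So I will prove the displayed identity under the ordering $\mathcal A_1\subseteq\mathcal A_0\subseteq\mathcal A$, which is the reading under which the lemma is invoked.

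The proof proceeds in three steps.

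\emph{Step 1 (integrability).} By conditional Jensen, $|\eta|\le \mathbb{E}\{|\chi|\mid\mathcal A_0\}$, so $\mathbb{E}|\eta|\le\mathbb{E}|\chi|<\infty$. Hence $\mathbb{E}\{\eta\mid\mathcal A_1\}$ exists.

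\emph{Step 2 (measurability).} $\xi:=\mathbb{E}\{\eta\mid\mathcal A_1\}$ is $\mathcal A_1$-measurable by construction.

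\emph{Step 3 (partial averaging).} Fix $B\in\mathcal A_1$. Since $\mathcal A_1\subseteq\mathcal A_0$, also $B\in\mathcal A_0$. Then
\[
\mathbb{E}\{\xi\,\mathbf 1_B\}=\mathbb{E}\{\eta\,\mathbf 1_B\}=\mathbb{E}\{\chi\,\mathbf 1_B\}.
\]
The first equality is the defining property of $\xi$ tested on $B\in\mathcal A_1$. The second is the defining property of $\eta$ tested on $B\in\mathcal A_0$.

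These three facts show that $\xi$ satisfies the defining conditions of $\mathbb{E}\{\chi\mid\mathcal A_1\}$. By a.s.\ uniqueness of conditional expectation (Radon--Nikodym), $\xi=\mathbb{E}\{\chi\mid\mathcal A_1\}$ almost surely, which is the claimed identity.

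The only genuine obstacle is Step 3. It is exactly the step that needs every test set $B\in\mathcal A_1$ to also lie in $\mathcal A_0$, so that the defining property of $\eta$ can be applied to it. This is why the inclusion must run as $\mathcal A_1\subseteq\mathcal A_0$. Under the opposite inclusion, Step 3 breaks down, and the counterexample above shows the conclusion genuinely fails rather than merely the proof.

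For the application in this appendix, one sets $\mathcal A_1=\sigma(\mathcal Z_k)$ and $\mathcal A_0=\sigma(\mathcal Z_{k+1})$ or $\sigma(x_k,\mathcal Z_k)$. The lemma then expresses the expected stage cost as an expectation over the filtering density, which is how the sensing parameters enter the cost.
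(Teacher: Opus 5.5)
You are right that the lemma as printed has its indices reversed: with $\mathcal A_0\subset\mathcal A_1$ the right-hand side collapses to $\mathbb{E}_{\mathbf P}\{\chi\mid\mathcal A_0\}$, your counterexample (trivial $\mathcal A_0$, $\mathcal A_1=\sigma(\chi)$) is valid, and under the ordering $\mathcal A_1\subseteq\mathcal A_0$ your partial-averaging argument is exactly the standard proof of Resnick's smoothing property, which the paper cites instead of giving a proof. The only slip is peripheral: Lemma~\ref{lem:stagecost} applies the result as $\mathbb{E}\|x_k\|_Q^2=\mathbb{E}\{\mathbb{E}\{\|x_k\|_Q^2\mid\mathcal Z_k\}\}$, with the trivial $\sigma$-algebra as the outer field and $\sigma(\mathcal Z_k)$ as the inner one, rather than your suggested $\sigma(\mathcal Z_k)$ outer with $\sigma(\mathcal Z_{k+1})$ or $\sigma(x_k,\mathcal Z_k)$ inner; this is still the coarser-outside ordering, so your proof covers the paper's use.
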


\begin{lem}\label{lem:stagecost}
For a causal law $u_k = u_k(\mathcal{Z}_k, p_0)$,
\begin{equation*}
\mathbb{E}\,\ell(x_k,u_k) = \!\iint \ell(x_k,u_k)\,
p(x_k \mid \mathcal{Z}_k;\theta)\, p(\mathcal{Z}_k;\theta)\, dx_k\, d\mathcal{Z}_k.
\end{equation*}
\end{lem}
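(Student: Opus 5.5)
The argument conditions on the information available at time $k$ and then applies the tower property, Lemma~\ref{lem:smoothing}. Take $\chi = \ell(x_k,u_k)$ on the probability space generated by $(x_0,w_{0:k},v_{0:k})$, which contains $p_0$ as a fixed datum. We assume integrability, $\mathbb{E}|\ell(x_k,u_k)|<\infty$; for the quadratic stage cost this follows from finite second moments, which hold because $f_\theta$ is locally bounded and the noises have finite covariances. Set $\mathcal{A}_1$ to the trivial $\sigma$-algebra and $\mathcal{A}_0=\sigma(\mathcal{Z}_k)$. Strictly, Lemma~\ref{lem:smoothing} is stated with $\mathcal{A}_0\subset\mathcal{A}_1$, so we use it with the roles named so that the coarser algebra sits outside. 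With this choice the tower property reads
\[
\mathbb{E}\,\ell(x_k,u_k) = \mathbb{E}\big\{\mathbb{E}\{\ell(x_k,u_k)\mid \mathcal{Z}_k\}\big\}.
\]

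\emph{Step 1 (inner conditional expectation).} By causality, $u_k=u_k(\mathcal{Z}_k,p_0)$ is $\sigma(\mathcal{Z}_k)$-measurable. Hence, conditionally on $\mathcal{Z}_k$, the input $u_k$ acts as a constant, and the only remaining randomness in $\ell(x_k,u_k)$ is $x_k$. This gives
\[
\mathbb{E}\{\ell(x_k,u_k)\mid\mathcal{Z}_k\}=\int \ell(x_k,u_k(\mathcal{Z}_k))\,p(x_k\mid\mathcal{Z}_k;\theta)\,dx_k .
\]
The precise tool here is the standard disintegration, or "freezing", lemma for regular conditional distributions: if $Y$ is $\mathcal{G}$-measurable and $X$ has regular conditional law $\mu(\cdot\mid\mathcal{G})$, then $\mathbb{E}\{\phi(X,Y)\mid\mathcal{G}\}=\int\phi(x,Y)\,\mu(dx\mid\mathcal{G})$. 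We apply it with $X=x_k$, $Y=u_k$ and $\mathcal{G}=\sigma(\mathcal{Z}_k)$.

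\emph{Step 2 (outer expectation).} The quantity obtained in Step~1 is a measurable function of $\mathcal{Z}_k$. Its expectation is therefore the integral of that function against the marginal law $p(\mathcal{Z}_k;\theta)$, and Fubini/Tonelli gives the stated double integral.

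\emph{Main obstacle.} The delicate part is measure-theoretic. First, we need existence of a regular conditional density $p(x_k\mid\mathcal{Z}_k;\theta)$; this holds because the state space is $\mathbb{R}^{r_x}$, which is Polish, with a density supplied by the Bayesian recursion \eqref{eq:Tupdate}--\eqref{eq:Mupdate}. Second, we must check that $u_k$ really is $\mathcal{Z}_k$-measurable. This is delicate because the paper's convention that $u_k$ depends on $\mathcal{Z}_{k-1}$ while $\mathcal{Z}_k$ itself includes $u_k$ makes the measurability subtle. I would handle it by noting that any $u_k$ that is a function of $\mathcal{Z}_{k-1}$ is in particular $\sigma(\mathcal{Z}_k)$-measurable, since $\mathcal{Z}_{k-1}\subset\mathcal{Z}_k$. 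The frozen-argument lemma then applies without any further change.
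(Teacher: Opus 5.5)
Your argument is correct and follows the paper's own proof, which simply conditions on $\mathcal{Z}_k$ and invokes the tower property of Lemma~\ref{lem:smoothing}; you additionally spell out the freezing (disintegration) step and the final Fubini step, and you rightly orient the $\sigma$-algebra inclusion so that the coarser one sits outside. One correction to a side remark: local boundedness of $f_\theta$ does not give finite second moments (e.g.\ $f(x)=e^{x^2}$ with Gaussian $x_0$), and $\mathbb{E}\|u_k\|_R^2<\infty$ is not automatic for an arbitrary causal policy; however, you do not need integrability here, because the quadratic stage cost with $Q,R\succeq 0$ satisfies $\ell\ge 0$, so the tower property and Tonelli hold in $[0,\infty]$.
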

\begin{proof}
Conditioning on $\mathcal{Z}_k$ yields a $\sigma$-algebra contained in the one
underlying \eqref{eq:Jstoch_N}; apply Lemma~\ref{lem:smoothing}.
\end{proof}

Therefore, the term $\E \| x_k \| _Q^2$ can be expressed by
\begin{align*}
    \E \| x_k \| _Q^2 &= \E \left \{ \E \left \{  \| x_{k}\|_Q^2 \mid \mathcal{Z}_k \right \} \right \}\\
    &=\E \left \{  \operatorname{tr}(Q \Sigma_{k \mid k}) +  \| x_{k \mid k}\|_Q^2 \right \},
\end{align*}
where the first two conditional moments are
\begin{equation}
\begin{aligned}\label{eq:filterMeanCov}
     x_{k \mid k} &= \E \left \{ x_k \mid \mathcal Z_{k}\right \},\\
    \Sigma_{k \mid k} &= \E \left \{ [x_k-x_{k \mid k}][x_k-x_{k \mid k}]^\top \mid \mathcal Z_{k} \right \},
\end{aligned}
\end{equation}
are substituted above after writing $x_k = x_{k \mid k} + \varepsilon_k$, where $\varepsilon_k$ is the estimation error (of zero mean), and using the cyclic property of the trace operator.

Hence, the cost functions \eqref{eq:Jstoch_N} and \eqref{eq:Jstoch_infty} can be represented by 
\begin{align*}
     &J_{\mathrm{sc}} = \frac{1}{N} \cdot \E \Bigg [ \| x_{N\mid N} \| ^2_ Q  + \operatorname{tr}(Q \Sigma_{N \mid N})  \nonumber\\
    &+ \sum_{k=0}^{N-1} \left [  \| x_{k\mid k} \|^2_ Q  + \|u_k \|_R^2 + \operatorname{tr}(Q \Sigma_{k \mid k}) \right ] \Bigg ],
\end{align*}
or in the infinite-horizon case
\begin{align*}
     &J_{\mathrm{sc}} = \lim_{k \to \infty} \E  \left [  \| x_{k\mid k} \|^2_ Q  + \|\kappa(\mathcal{Z}_k) \|_R^2 + \operatorname{tr}(Q \Sigma_{k \mid k}) \right ].
\end{align*}

In this paper we have two regimes, linear and nonlinear. For the linear cases (Sections~\ref{sec:lqg} and \ref{sec:hinf}), we adopt the infinite horizon cost, with $u_k = \kappa(\mathcal{Z}_k) = K x_{k \mid k}$, that is, certainty equivalent state-feedback control (due to the separation principle: the state error covariance $\Sigma_{k \mid k}$ evolution is not dependent on $u_k$. After substituting the feedback control, and using the cyclic property of the trace again, the cost becomes that in \eqref{eq:linear stoch cost}, given that $\Sigma^\epsilon = \lim_{k \to \infty} \E \varepsilon_k \varepsilon_k^\top = \lim_{k \to \infty} \Sigma_{k \mid k}$, and $\Sigma = \lim _{k \to \infty} \E x_{k \mid k} x_{k \mid k}^\top$, the estimation error covariance and the state covariance (about the origin), respectively.

For the nonlinear regime, we choose the eKF to track the evolution of the state first two moments. In addition to \eqref{eq:filterMeanCov}, let
\begin{align*}
     x_{k \mid k-1} &= \E \left \{ x_k \mid Z_{k-1}, u_{k-1} \right \}, \\
    \Sigma_{k \mid k-1} &= \E \left \{ [x_k-x_{k \mid k-1}][x_k-x_{k \mid k-1}]^\top \mid Z_{k-1},u_{k-1} \right \}.
\end{align*} 
At each time step, the eKF is the recursion
\begin{align}
 x_{k+1 \mid k} = f_\theta( x_{k \mid k},u_k),\quad \label{eq:eKF_state}
\Sigma_{k+1 \mid k} = F_k \Sigma_{k \mid k} F_k^\top + W, 
\end{align}
where
\begin{equation}
\begin{aligned} \label{eq:eKF_supportVariables}
& x_{k\mid k}= x_{k \mid k-1} + \Omega_{k} \left [y_{k}-h_\theta(x_{k \mid k-1}, u_k)\right],\\
&\Sigma_{k \mid k} = \left [I - \Omega_{k} H_{k} \right]\Sigma_{k \mid k-1},\\
 &\Omega_{k}= \Sigma_{k \mid k-1} H_{k}^\top \left [ H_{k} \Sigma_{k \mid k-1} H_{k}^\top+V\right]^{-1}, \\
 &F_k = \left. \frac{\partial f_\theta(x,u_k)}{\partial x} \right | _{x_{k\mid k}},\quad H_{k} = \left. \frac{\partial h_\theta(x,u_k)}{\partial x} \right | _{x_{k\mid k-1}}.
\end{aligned}
\end{equation}
The recursion is initialized by $ x_{0 \mid -1}$, $\Sigma_{0 \mid -1}$, the moments of $p_0$.

Note that the evolution of the eKF in \eqref{eq:eKF_supportVariables} depends on $y_k$. As an approximation, we drop this dependence, so we can approximate the evolution of the eKF to the future deterministically, before future measurements are received. This approximation can be plausible in many applications, since the measurement correction term $[y_{k} - h_\theta(x_{k\mid k-1}, u_k)]$, in \eqref{eq:eKF_supportVariables}, can be close to independent from $\Omega_{k}$, almost of a zero-mean white noise sequence when the eKF is fine-tuned \cite[Sec.~8.2]{anderson2012optimal}, and when $f_\theta$ is not highly nonlinear in its first argument. This weakens the need for the expectation in the finite-horizon $J_{\mathrm{sc}}$, and leads to a measurement-free version of the eKF in \eqref{eq:eKF_state} and \eqref{eq:eKF_supportVariables}. Now we can define the deterministic version of the eKF, propagating the information state $(x_{k \mid k}, \Sigma_{k \mid k})$,
\begin{equation*}
\mathcal{X}_{k+1} = 
\begin{pmatrix}
    x_{k+1 \mid k+1} \\
    \operatorname{upvec}\Sigma_{k+1 \mid k+1}
\end{pmatrix}
     = \mathcal{F}_\theta(\mathcal{X}_k, u_k),
\end{equation*}
where $\mathcal{F}_\theta$ is constructed by \eqref{eq:eKF_state} and \eqref{eq:eKF_supportVariables} when the term $y_k - h_\theta(x_{k \mid k-1},u_k)$ is set to zero, and $\operatorname{upvec}$ is the upper-triangle elements vectorized (to reduce the state dimension using symmetry).

\section{Proof of the LQG regularity Proposition}\label{app:lqgreg}

This appendix proves Proposition~\ref{prop:lqgreg} by verifying
Assumption~\ref{as:env}(A.I)-(A.IV) for the control SDP~\eqref{eq:Jcont}. The estimation
SDP~\eqref{eq:Jest} is its formal dual (swap $A_\theta\!\leftrightarrow
A_\theta^\top$, $B_\theta\!\leftrightarrow C_\theta^\top$, $W\!\leftrightarrow Q$,
$R\!\leftrightarrow V$; detectability for stabilizability); 

Write the program as $J^\star(\theta)=\min_z f(z,\theta)$ s.t.\
$z\in\mathcal C(\theta)$, with $z=(\Sigma,L,Z_0)$, linear objective
$f=\operatorname{tr}(Q\Sigma)+\operatorname{tr}(RZ_0)$, and $\mathcal C(\theta)$ the
two inequalities $M_1=\left[\begin{smallmatrix}Z_0&L\\ L^\top&\Sigma
\end{smallmatrix}\right]\succeq0$ and $M_2=\left[\begin{smallmatrix}\Sigma-W&
A_\theta\Sigma+B_\theta L\\ \star&\Sigma\end{smallmatrix}\right]\succeq0$. Assume
$(A_\theta,B_\theta)$ stabilizable and $Q,R,W\succ0$.

\emph{(A.I) Convexity, smooth data, strong duality.} $f$ is linear and
$\mathcal C(\theta)$ is defined by inequalities affine in $z$ (hence a convex set)
whose coefficients are affine in the model matrices $(A_\theta,B_\theta,Q,R,W)$, so
the data are jointly $C^1$ in $(z,\theta)$. A strictly feasible point exists: for any
stabilizing $K_0$ let $\Sigma_0\succ0$ solve $A_{\mathrm{cl},0}\Sigma_0
A_{\mathrm{cl},0}^\top-\Sigma_0+W\prec0$ ($A_{\mathrm{cl},0}=A_\theta+B_\theta K_0$
Schur, $W\succ0$) and set $L_0=K_0\Sigma_0$. The LMI slack variable $Z_0$ can be chosen freely such that $Z_0\succ K_0\Sigma_0K_0^\top$. Slater then gives strong duality with attained primal
and dual solutions \cite{boyd2004convex}.

\emph{(A.II) Unique primal minimizer.} The uniqueness of $(K^\star, \Sigma^\star)$ is a well-established fact in the LQG literature (given the positive definiteness of the weight matrices and controllability/observability of the system).

\emph{(A.IV) Smooth parameter entry.} $\theta$ enters only through $A_\theta,B_\theta$ (and, if co-designed,
$W,Q,R$), each assumed $C^1$ in $\theta$.

\emph{(A.III) Unique dual, strict complementarity.} We prove it by constructing the dual and showing it is a singleton. Using the
Lyapunov equality $\Sigma^\star-W=A_{\mathrm{cl}}\Sigma^\star A_{\mathrm{cl}}^\top$
and $A_\theta\Sigma^\star+B_\theta L^\star=A_{\mathrm{cl}}\Sigma^\star$,
\begin{equation}\label{eq:rank1}
M_1^\star=\begin{bmatrix}K^\star\\ I\end{bmatrix}\Sigma^\star
\begin{bmatrix}K^\star\\ I\end{bmatrix}^{\!\top}\!,\quad
M_2^\star=\begin{bmatrix}A_{\mathrm{cl}}\\ I\end{bmatrix}\Sigma^\star
\begin{bmatrix}A_{\mathrm{cl}}\\ I\end{bmatrix}^{\!\top}\!,
\end{equation}
each of rank $n$, with $\ker M_1^\star=\operatorname{range}[\,I;-K^{\star\top}]$ and
$\ker M_2^\star=\operatorname{range}[\,I;-A_{\mathrm{cl}}^\top]$. For multipliers
$N\succeq0$ (of $M_1$) and $S\succeq0$ (of $M_2$), complementary slackness
$NM_1^\star=M_1^\star N=0$, $SM_2^\star=M_2^\star S=0$ force $\operatorname{range}N\subseteq\ker M_1^\star$
and $\operatorname{range}S\subseteq\ker M_2^\star$; a positive semi-definite matrix with range in
$\operatorname{range}(\mathcal{X})$ must factor as $\mathcal{X} \mathcal{Y} \mathcal{X}$, $\mathcal{Y} \succeq 0$, so
\begin{equation}\label{eq:lift}
\begin{aligned}
&N=\begin{bmatrix}I\\ -K^{\star\top}\end{bmatrix}\Theta
\begin{bmatrix}I\\ -K^{\star\top}\end{bmatrix}^{\!\top}\!,\quad
S=\begin{bmatrix}I\\ -A_{\mathrm{cl}}^\top\end{bmatrix}\Xi
\begin{bmatrix}I\\ -A_{\mathrm{cl}}^\top\end{bmatrix}^{\!\top}\!,\\
&\Theta \succeq 0 , \Xi \succeq 0.
\end{aligned}
\end{equation}
Thus the whole dual is carried by two cores $\Theta=N_{11}$, $\Xi=S_{11}$, with
$N_{12}=-\Theta K^\star$ and $S_{12}=-\Xi A_{\mathrm{cl}}$. The Lagrangian
$ L=\operatorname{tr}(Q\Sigma)+\operatorname{tr}(RZ_0)-\langle N,M_1\rangle
-\langle S,M_2\rangle$ has three stationarity conditions,
\begin{equation}\label{eq:kkt}
\begin{aligned}
&\partial_{Z_0}\!:\,\Theta=R,\quad
\partial_{L}\!:\,N_{12}=-B_\theta^\top S_{12},\\
&\partial_{\Sigma}\!:\,Q-N_{22}-S_{11}-S_{22}-(A_\theta^\top S_{12}+S_{12}^\top A_\theta)=0.
\end{aligned}
\end{equation}
The first fixes $\Theta=R$, hence $N_{12}=-RK^\star$ and $N_{22}=K^{\star\top}RK^\star$;
the second then reads
\begin{equation}\label{eq:star}
B_\theta^\top\Xi A_{\mathrm{cl}}=-RK^\star .
\end{equation}
Writing $A_\theta=A_{\mathrm{cl}}-B_\theta K^\star$ in $\partial_\Sigma$, the
relation \eqref{eq:star} cancels every $B_\theta$-cross term (explicitly
$A_\theta^\top\Xi A_{\mathrm{cl}}+A_{\mathrm{cl}}^\top\Xi A_\theta=
2A_{\mathrm{cl}}^\top\Xi A_{\mathrm{cl}}+2K^{\star\top}RK^\star$) and
$\partial_\Sigma$ collapses to the closed-loop Stein equation
\begin{equation}\label{eq:stein}
\Xi=Q+K^{\star\top}RK^\star+A_{\mathrm{cl}}^\top\Xi A_{\mathrm{cl}} .
\end{equation}
Since $A_{\mathrm{cl}}$ is Schur, \eqref{eq:stein} has a unique solution; the Lyapunov equation in the observability gramian is $P=Q+K^{\star\top}RK^\star+A_{\mathrm{cl}}^\top PA_{\mathrm{cl}}$, so
$\Xi=P$. Hence the dual is the single point $S=[\,I;-A_{\mathrm{cl}}^\top]P[\,I;
-A_{\mathrm{cl}}^\top]^\top$ with $S_{11}^\star=P$ (and $N = [\,I; -K^{\star\top}] R [\,I;
-K^{\star\top}]^\top$). Strict complementarity follows by rank counting:
$\operatorname{rank}N=m$ ($\Theta=R\succ0$) and $\operatorname{rank}S=n$
($\Xi=P\succ0$ from $Q\succ0$) complete the ranks of $M_1^\star,M_2^\star$ (both $n$)
to full.
\hfill$\square$

\section{\texorpdfstring{Existence of a subgradient under (A.I) and a unique primal or dual solution}{Existence of a subgradient under (A.I) and a unique primal or dual solution}}\label{app:subgrad}
 
\providecommand{\Lag}{\mathfrak{L}}
\providecommand{\Zs}{\mathcal{Z}^{\star}}
\providecommand{\Ls}{\Lambda^{\star}}
\providecommand{\tr}{\operatorname{tr}}

Let $\lambda=(\mu,S)$ with $S\succeq0$, and
$\mathfrak L=J+\langle\mu,g\rangle-\langle S,G\rangle$.
$\Zs(\theta)$ and $\Ls(\theta)$ denote the primal and dual optimal sets, $(z^\star,\lambda^\star)\in\Zs(\theta)\times\Ls(\theta)$ is the pair returned by the solver, and $\hat g:=\nabla_\theta \mathfrak L(z^\star,\lambda^\star,\theta)$.
We write $\nabla_\theta$ for the partial gradient and keep $\partial$ for
the Clarke subdifferential \cite[Sec.~2.1]{clarke1983optimization}.
 
Note that $\Zs(\theta')$ and $\Ls(\theta')$ are nonempty and lie in a
fixed compact set $K$ for all $\theta'\in\Theta_0$, $\Theta_0$ a closed ball around $\theta$: for the primal sets this is the boundedness assumed in Proposition~\ref{prop:subgrad}, and for the dual sets it follows from Slater's condition. This can be seen if we let $\hat z$
be a Slater point (strictly feasible point) at $\theta$. That is, shrinking $\Theta_0$ if necessary, $G(\hat z,\theta')\succeq\sigma I$
for some $\sigma>0$ and all $\theta'\in\Theta_0$. By continuity of $J$ in $\theta$, every optimal point there costs at most $\alpha=\max_{\theta' \in \Theta_0}J(\hat z,\theta')$.
In the primal case, e.g.\ for the control SDP~\eqref{eq:Jcont}: (i) $\Sigma,Z_0\succeq0$, (ii) $L \Sigma^{-1} L^\top \preceq Z_0$, and by the above bound (iii) $\tr(Q\Sigma)+\tr(RZ_0)\le\alpha$
with $Q,R\succ0$, (i)-(iii) bound the primal $(\Sigma,L,Z_0)$. In the dual variables case: for $\lambda=(\mu,S)\in\Ls(\theta')$, strong duality gives
$J^\star(\theta')\le \mathfrak  L(\hat z,\lambda,\theta')=J(\hat z,\theta')
-\langle S,G(\hat z,\theta')\rangle\le J(\hat z,\theta')-\sigma\tr S$,
so $\|S\|_F\le\tr S\le(\alpha-\inf_{\theta'}J^\star(\theta'))/\sigma$, where $\inf_{\theta'}J^\star(\theta')\ge\min_{K\times\Theta_0}J>-\infty$ by the primal bound
(cf.\ \cite[Thm.~4.2]{bonnans1998guided}).

For $z_2\in\Zs(\theta_2)$ and
$\lambda_1\in\Ls(\theta_1)$,
\begin{equation}\label{eq:C1}
J^\star(\theta_2)-J^\star(\theta_1)\ \ge\
\mathfrak L(z_2,\lambda_1,\theta_2)-\mathfrak L(z_2,\lambda_1,\theta_1).
\end{equation}
This holds because $J^\star(\theta_2)=J(z_2,\theta_2)\ge
\mathfrak L(z_2,\lambda_1,\theta_2)$ by weak duality ($z_2$ is feasible at $\theta_2$ and $S_1\succeq0$), while
$J^\star(\theta_1)=\inf_z \mathfrak L(z,\lambda_1,\theta_1)$ by strong duality. By continuity of $\nabla_\theta \mathfrak L$, there exists $M=\max_{K\times\Theta_0}\|\nabla_\theta \mathfrak L\|$. Applying \eqref{eq:C1}
both ways with the mean value theorem on the segment $[\theta_1,\theta_2]\subset\Theta_0$ gives $|J^\star(\theta_2)-J^\star(\theta_1)|\le
M\|\theta_2-\theta_1\|$. By continuity of $J^\star$ and $L$, limit
points of optimal solutions at $\theta+td$ are optimal at $\theta$
(cf.\ \cite[Sec.~4.1]{bonnans2000perturbation}). Hence, as $t\downarrow0$,
any $z_t\in\Zs(\theta+td)$ converges to $z^\star$ if the primal solution is
unique, and any $\lambda_t\in\Ls(\theta+td)$ converges to $\lambda^\star$
if the dual solution is unique.
 
Take \eqref{eq:C1} with $(\theta_2,\theta_1)=
(\theta+td,\theta)$, $z_t$ and $\lambda^\star$, and apply the mean value
theorem in $\theta$:
\[
\begin{aligned}
J^\star(\theta+td)-J^\star(\theta)
&\ge t\langle\nabla_\theta \mathfrak L(z_t,\lambda^\star,\theta+ s_t\, t d),d\rangle\\
&=t\langle \hat g,d\rangle+o(t)
\end{aligned}
\]
for some $s_t\in(0,1)$, when $z_t\to z^\star$, i.e.\ when the primal solution is unique. Exchanging the roles, i.e.\
$(\theta_2,\theta_1)=(\theta,\theta+td)$ with $z^\star$ and $\lambda_t$,
gives the reverse bound $J^\star(\theta+td)-J^\star(\theta)\le t\langle\hat g,d\rangle+o(t)$ when $\lambda_t\to\lambda^\star$, i.e.\ when the dual solution is unique. Under both, $(J^\star)'(\theta;d)=\langle
\hat g,d\rangle$ for all $d$; a Lipschitz function with this property is
differentiable at $\theta$.
 
Now we get to the Clarke subgradient, when only one of the two solutions is unique. If the primal is unique, the first bound gives $(J^\star)^\circ(\theta;d)
\ge\limsup_{t\downarrow0}[J^\star(\theta+td)-J^\star(\theta)]/t\ge
\langle \hat g,d\rangle$. If the dual is unique, the reverse bound with $d$ replaced by $-d$ reads $[J^\star(\theta)-J^\star(\theta-td)]/t\ge\langle\hat g,d\rangle+o(1)$, the same estimate with base point $\theta-td\to\theta$, so again $(J^\star)^\circ(\theta;d)\ge\langle\hat g,d\rangle$. In either case, as $d$ is arbitrary, $\hat g\in\partial J^\star(\theta)$ by definition
(see \cite[Sec.~2.1]{clarke1983optimization} for more details). In the unique-dual case, the reverse bound with $d=-\hat g$ also gives $J^\star(\theta-t\hat g)\le J^\star(\theta)-t\|\hat g\|^2+o(t)$, so $-\hat g$ is a descent direction whenever $\hat g\neq0$; with only the primal unique, this need not hold. If neither solution is unique, $\hat g$ need not be a subgradient: for $\min_{z,s}\{s:\ s\ge\pm\theta z,\ |z|\le1\}$ (Slater holds, data affine in $\theta$), $J^\star\equiv0$, yet at $\theta=0$ the KKT pair $z=1$, $s=0$ with multipliers $(1,0)$ on $s\ge\pm\theta z$ gives $\hat g=1\notin\{0\}=\partial J^\star(0)$.

\bibliographystyle{plain}
\bibliography{References}

\vspace{0.1cm}
\begin{flushright}
	\scriptsize \framebox{\parbox{2.5in}{Government License: The
			submitted manuscript has been created by UChicago Argonne,
			LLC, Operator of Argonne National Laboratory (``Argonne").
			Argonne, a U.S. Department of Energy Office of Science
			laboratory, is operated under Contract
			No. DE-AC02-06CH11357.  The U.S. Government retains for
			itself, and others acting on its behalf, a paid-up
			nonexclusive, irrevocable worldwide license in said
			article to reproduce, prepare derivative works, distribute
			copies to the public, and perform publicly and display
			publicly, by or on behalf of the Government. The Department of Energy will provide public access to these results of federally sponsored research in accordance with the DOE Public Access Plan. http://energy.gov/downloads/doe-public-access-plan. }}
	\normalsize
\end{flushright}	

\end{document}